\documentclass[a4paper,UKenglish]{lipics-v2018}

\usepackage{amsmath,amssymb,amsthm,graphicx}
\usepackage{mathtools, arcs}
\usepackage[dvipsnames]{xcolor}
\usepackage{soul}
\nolinenumbers
\usepackage{microtype}


\bibliographystyle{plainurl}

\author{Farnaz Sheikhi}{School of Computer Science, Institute for Research in Fundamental Sciences (IPM), Tehran, Iran.}{f.sheikhi@ipm.ir}{}{}

\author{Sharareh Alipour}{School of Computer Science, Institute for Research in Fundamental Sciences (IPM), Tehran, Iran.}{alipour@ipm.ir}{}{}

\authorrunning{F. Sheikhi and S. Alipour}

\Copyright{F. Sheikhi and S. Alipour}

\subjclass{\ccsdesc[500]{Theory of computation~Design and analysis of algorithms}
\ccsdesc[500]{Theory of computation~Computational geometry}}

\keywords{Geometric Separability, Point Sets, Polygonal Environment}






\EventEditors{John Q. Open and Joan R. Access}
\EventNoEds{2}
\EventYear{2018}
\EventDate{December 16-19, 2018}
\EventLocation{Jiaoxi, Yilan County, Taiwan}
\EventLogo{}
\SeriesVolume{42}
\ArticleNo{23}

  \newif\ifComments
\Commentstrue
\ifComments
  \newcommand{\farnaz}[1]{\textcolor{blue}{Farnaz: #1}}
  \newcommand{\sharare}[1]{\textcolor{LimeGreen}{Sharare: #1}}

\else
  \newcommand{\farnaz}[1]{}
  \newcommand{\sharare}[1]{}
  \newcommand{\rmv}[1]{}
\fi

\newif\ifOmit
\Omittrue

\theoremstyle{definition}
\newtheorem{observation}[theorem]{Observation}

\begin{document}
\title{On Triangluar Separation of Bichromatic Point Sets in Polygonal Environment}
\maketitle

\begin{abstract}
Let $\mathcal P$ be a simple polygonal environment with $k$ vertices in the plane. Assume that a set $B$ of $b$ blue points and a set $R$ of $r$ red points are distributed in $\mathcal P$. We study the problem of computing triangles that separate the sets $B$ and $R$, and fall in  $\mathcal P$. We call these triangles \emph{inscribed triangular separators}. We propose an output-sensitive algorithm to solve this problem in $O(r \cdot (r+c_B+k)+h_\triangle)$ time, where $c_B$ is the size of convex hull of $B$, and $h_\triangle$ is the number of inscribed triangular separators. We also study the case where there does not exist any inscribed triangular separators. This may happen due to the tight distribution of red points around convex hull of $B$ while no red points are inside this hull. In this case we focus to compute a triangle that separates most of the blue points from the red points. We refer to these triangles as \emph{maximum triangular separators}. Assuming $n=r+b$, we design a constant-factor approximation algorithm to compute such a separator in $O(n^{4/3} \log^3 n)$ time. 
"Eligible for best student paper"
\end{abstract}

\section{Introduction}

A vast area of research in computational geometry focuses on optimization problems on point sets. An interesting category of them arises from facility location applications where customers are modeled by points in the plane, and the area of service is modeled by a geometric shape covering the points. This motivates the interesting class of \emph{covering problems} where the goal is to cover all points by a geometric shape (the \emph{cover}) with the minimum area/perimeter. Techniques and time complexity of the algorithms proposed vary according to the properties of the shape of cover. This problem has been studied for circular~\cite{Megiddo-LP}, rectangular~\cite{rotating-cal}, L-shaped~\cite{L-cover}, and arbitrary triangular~\cite{tri-rourke,tri-chang,tri-ag} covers. The problem has also attracted attentions for covers in less general forms:  isosceles triangular cover~\cite{bose1} and triangular cover with a fixed angle~\cite{bose2} for instances.

Gradually researches focus on inducing more realistic assumptions on points and the environment of covering. Naturally, there can be desirable and undesirable points to service, where covering the undesirable points together with the desired ones may not only be useful but also be harmful. This discrimination in data is often modeled by assigning different colors to points. There it is assumed that a set $B$ of $b$ blue points (modeling the desirable customers) and a set $R$ of $r$ red points (modeling the undesirable customers) are given.  Then covering problem can be viewed as a \emph{geometric separability problem} where the goal is to use a geometric shape (the \emph{separator}) to cover $B$ without covering any points in $R$. The separability problem has been studied for linear~\cite{Megiddo-LP}, circular~\cite{sep-cir}, rectangular~\cite{sep-rect, kreveld-tech, farnaz-rect}, and L-shaped~\cite{sep-ours} separators.
For the rectangular separator, motivated by the reconstruction of urban scenes from LIDAR data, the separability problem has also been studied to find the rectangular separators that fall in a simple closed curvature defined by a collection of circular arcs~\cite{sep-rect, kreveld-tech}.

The separability problem is also studied for more general class of separators: convex~\cite{conv-sep} and simple~\cite{simple-fekete, simple-apprx-mitchel} polygonal separator with the minimum number of edges. The problem of finding a convex polygon with the minimum number of edges separating the two point sets, if it exists, can be solved in $O((r+b) \log(r+b))$ time~\cite{conv-sep}. However,  omitting the convexity condition of the separator in the above problem  makes it NP-complete~\cite{simple-fekete, simple-apprx-mitchel}.

The majority of separability problems are studied under the assumption that the sets $R$ and $B$ are freely distributed in the plane. However, in the real world problems, geographical constraints impose restrictions on the feasible locations of the separator. This motivates the study of separability problem for the case where the  point sets $B$ and $R$ are distributed in a polygonal environment $\mathcal P$, and the goal is to find separators that separate $B$ from $R$ and fall in $\mathcal P$~\cite{demaine-chord}. It has been shown that the problem of finding the minimum number of non-crossing \emph{chords}\footnote{A chord is defined as a segment that connects two points on the boundary of $\mathcal P$ and falls inside $\mathcal P$.} separating the two point sets in $\mathcal P$ is polynomially solvable when $\mathcal P$ is very simple-- namely, a strip or a triangle. However,  this problem becomes NP-complete when $\mathcal P$ may have holes. In between, there is an intriguing open problem~\cite{demaine-chord}.

In this paper we study the separability of $B$ and $R$, where the sets are distributed in a a polygonal environment $\mathcal P$. We use arbitrary triangles as the separators. (Our triangular separators neither need to have fixed corner angles nor the same side length.) We propose an output-sensitive algorithm running in $O(r \cdot (r+c_B+k)+h_\triangle)$  time to find $h_\triangle$ triangular separators that separate $B$ from $R$ and falls in the polygonal environment $\mathcal P$. We call these separators \emph{inscribed triangular separators}. Our algorithm can also handle the case where $\mathcal P$ is a simple closed curvature defined by a collection of circular arcs simulating the density of points as in~\cite{sep-rect,kreveld-tech}. We design the lower bound of $\Omega(r^3)$ on the number of triangular separators. It may happen the case where there does not exist any inscribed triangular separators. This may happen due to the tight distribution of red points around convex hull of $B$ while no red points are inside this hull. In this case we design a constant-factor approximation algorithm that finds a triangle that separates most of the blue points from the red points (referred as the \emph{maximum triangular separator}). Assuming $n=r+b$, the proposed approximation algorithm runs in $O(n^{4/3} \log^3 n)$ time.

\section{Preliminaries}
Let $B$ be a set of $b$ blue points and $R$ be a set of $r$ red points. We assume that points are in general position, that is no three points are collinear and no three lines have a common intersection. For more clear explanation of our technique, for now we assume that points are located in the plane, and ignore the restriction that $B$ and $R$ are inside a polygonal environment $\mathcal P$. We will get back to it later.

Let $CH_B$ denote the convex hull of blue points with $c_B$ vertices.
It is easy to see that every triangular separator has to enclose $CH_B$. Hence, to have such a separator, a necessary condition is that $CH_B$ does not contain any red points. To find the triangular separators we  make them well-defined. Note that by translation and rotation, we can change any triangular separator into a separating triangle with each side (or the line through each side) passing through a red and a blue point. We call these modified triangular separators \emph{canonical triangular separators}, and in the rest of the paper we focus to find these separators. It is easy to see that there are $\Omega(r^3)$ canonical triangular separators in the worst case. See Figure~\ref{lower-bn}. From each red point we draw tangents to $CH_B$. This results in the arrangement $\mathcal A$ with $2r$ lines in the plane.

\begin{observation}
The vertices of canonical triangular separators lie on the vertices of arrangement $\mathcal A$.
\end{observation}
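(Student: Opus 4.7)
The plan is to show that every one of the three sides of a canonical triangular separator $T$ lies on one of the $2r$ lines that define $\mathcal{A}$. Once this structural claim is in hand, the three vertices of $T$ are pairwise intersections of lines in $\mathcal A$, and hence are vertices of $\mathcal A$ by the very definition of a line arrangement's vertex set.

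For the structural claim, I would fix one side $s$ of $T$ and let $\ell$ be the line supporting $s$. By the definition of ``canonical'' recalled just before the observation, $\ell$ passes through some red point $r_i \in R$ and some blue point $b_j \in B$. Because $T$ is a triangular separator, $T$ encloses all of $B$ and, in particular, encloses $CH_B$. Therefore every blue point lies in the closed half-plane on the interior side of $\ell$. Combined with $b_j \in \ell$, this says precisely that $\ell$ is a supporting line of $CH_B$ at $b_j$, i.e., a tangent line to $CH_B$. Since $\ell$ additionally contains $r_i$, it is a tangent line from $r_i$ to $CH_B$, which is exactly one of the two lines contributed by $r_i$ to $\mathcal{A}$.

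Applying this argument to each of the three sides of $T$ shows that the three supporting lines of $T$ are three lines of $\mathcal{A}$, and so each vertex of $T$ is the intersection of two lines of $\mathcal{A}$, hence a vertex of the arrangement.

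The only subtle point, and therefore the main obstacle, is justifying that the blue point on $\ell$ must in fact be a vertex of $CH_B$ at which $\ell$ is a supporting line, rather than an interior blue point that just happens to lie on the boundary of $T$. This is ruled out by the containment $CH_B \subseteq T$ together with general position: if an interior blue point lay on $\ell$, then some vertex of $CH_B$ would lie strictly outside $T$, contradicting the separating property. Once this technicality is dispatched, the rest of the argument is immediate from the construction of $\mathcal{A}$.
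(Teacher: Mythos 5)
Your proof is correct: the key step — that a line through a blue point with all of $B$ (hence $CH_B$) on one closed side is a supporting line of $CH_B$, and therefore, containing a red point $r_i$, must be one of the two tangents from $r_i$ that define $\mathcal A$ — is exactly the reasoning the paper leaves implicit, since it states this Observation without proof. The technicality you flag (the blue point on the side must be a hull vertex, not an interior point) is correctly dispatched by the containment $CH_B \subseteq T$ together with general position, so nothing is missing.
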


\begin{figure}[h]
\centering
\includegraphics[width=5cm]{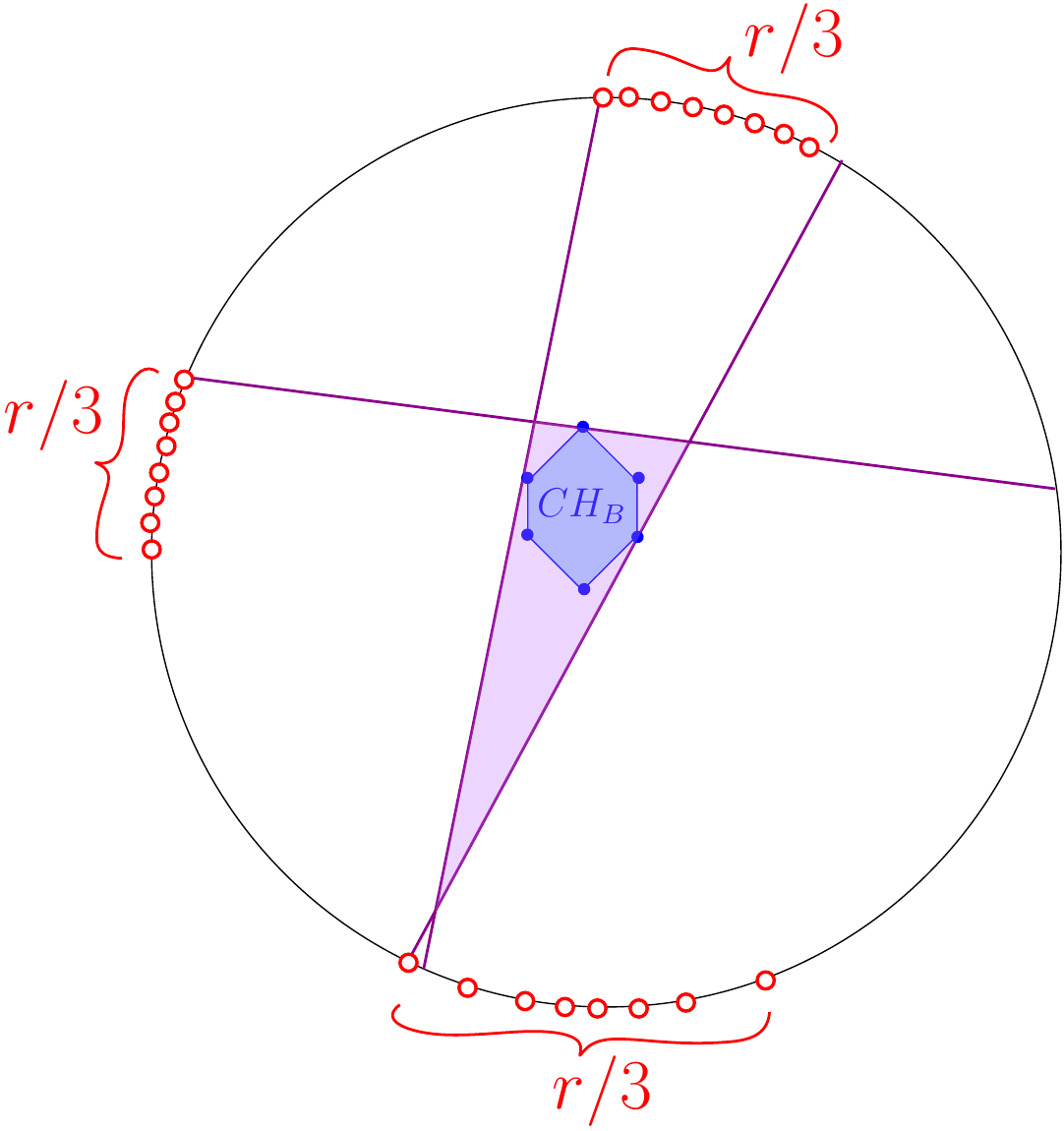}
\caption{Lower bound on the number of canonical triangular separators. }
\label{lower-bn}
\end{figure}

Arrangement $\mathcal A$ has $O(r^2)$ vertices (the candidate locations for the vertices of canonical triangular separators). In Section~\ref{sec-rank}, we propose a ranking strategy that helps  to process each vertex of $\mathcal A$ in $O(1)$ time. To get to that we need to define some concepts.

\begin{definition}
For a point $p$ we define the \emph{semi-triangle based on $p$}, denoted by $\triangle_p$, as follows. Consider tangents from $p$ to $CH_B$. Let $b_{\mathrm{left}}$ and $b_{\mathrm{right}}$ be the tangent points on $CH_B$. The region bounded by the directed segments $\overrightarrow{p b_{\mathrm{left}}}$ and $\overleftarrow{p b_{\mathrm{right}}}$ and the clockwise chain from $b_{\mathrm{left}}$ to $b_{\mathrm{right}}$ on $CH_B$ defines the \emph{semi-triangle based on $p$} or $\triangle_p$ for short.
\end{definition}

For a point $p$ in the plane, $\triangle_p$ is illustrated in Figure~\ref{semi-tri-p}. By $\triangle_p$ we consider the region excluding the boundary.
\begin{figure}[h]
\centering
\includegraphics[width=4cm]{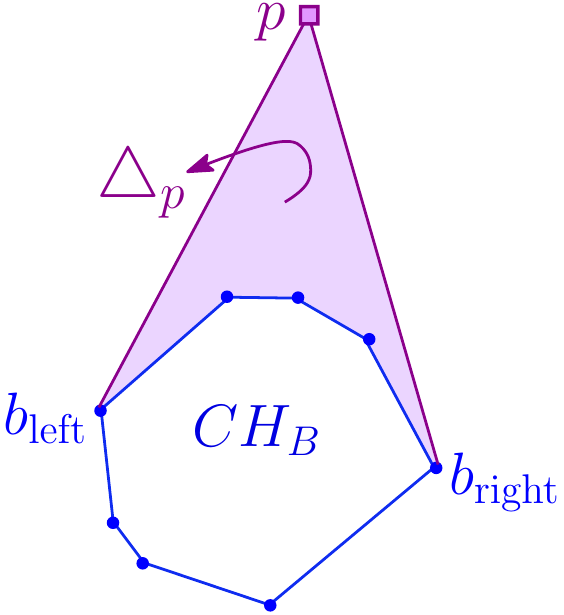}
\caption{For a point $p$, the shaded area shows  the semi-triangle based on $p$ ($\triangle_p$). }
\label{semi-tri-p}
\end{figure}

\section{Ranking strategy}~\label{sec-rank}
It is time to look at the separability problem in our polygonal environment $\mathcal{P}$.
Arrangement $\mathcal{A}$ has $O(r^2)$ vertices in the plane. Now let's look at the part of $\mathcal{A}$ that falls in the polygonal environment $\mathcal{P}$. Clearly each line in $\mathcal{A}$ can intersect $\mathcal{P}$ several times (according to the shape of $\mathcal{P}$). However, since we aim to find the inscribed triangular separators, it is only the first intersection that matters (by the first  we mean the first intersection  from  a vertex in $\mathcal{A}$ in outward direction from $CH_B$). Hence, we can ignore the rest of intersections. This way $\mathcal{A}$ has at most $2r$ vertices on the boundary of $\mathcal P$.  We omit the part of $\mathcal{A}$ that falls outside $\mathcal P$, and from now on by $\mathcal{A}$ we mean the part of this arrangement that falls inside $\mathcal P$.  This way $\mathcal{A}$ can be treated as a geometric graph. We can classify vertices of $\mathcal{A}$ into four following types:
\begin{itemize}
\item \emph{Type-I}: Vertices that coincide with red points.
\item \emph{Type-II}: Vertices that coincide with blue points.
\item \emph{Type-III}: Vertices that lie on the boundary of $\mathcal P$.
\item \emph{Type-IV}: Vertices that fall in non of the above groups.
\end{itemize}
In this section we design a \emph{ranking strategy} for vertices of $\mathcal{A}$. Having this strategy in the pre-processing stage, in the main algorithm for an arbitrary vertex $v\in \mathcal{A}$, we can decide in $O(1)$ time whether $\triangle_v$ contains a red point or not. Ranking strategy consists of two steps: an \emph{initialization step} and an \emph{iterative step}.

In the initialization step we find the set of \emph{ancestors}. For each vertex $b$  of $CH_B$ which is a tangent point, we define two ancestors as follows. If there is just a single line in $\mathcal{A}$ passing through $b$, then ancestors of $b$ are the first vertices that we visit once we walk on the line from $b$ upwards and downwards. Otherwise, if there are several lines in $\mathcal{A}$ passing through $b$, then we take the line with the minimum slope, and consider the first vertices that we visit once we walk on this line from $b$ upwards and downwards.
Considering all tangent point vertices of $CH_B$, let $\phi$ denote the set of ancestors. It is observed that

\begin{observation}~\label{empty-tri}
For every vertex $v \in \phi$, $\triangle_v$ contains no red points.
\end{observation}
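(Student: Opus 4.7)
The plan is a proof by contradiction: fix an ancestor $v\in\phi$, so $v$ is the first vertex of $\mathcal{A}$ encountered when walking from some tangent vertex $b$ of $CH_B$ along the line $\ell$ used in the definition of the ancestors. I will assume some red point $r$ lies in the interior of $\triangle_v$ and then produce a vertex of $\mathcal{A}$ strictly between $b$ and $v$ on $\ell$, contradicting the definition of $v$. Write $b'$ for the second tangent point from $v$ to $CH_B$, so $\triangle_v$ is bounded by the segments $vb$, $vb'$, and the arc of $CH_B$ from $b$ to $b'$.

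The heart of the argument is a chord lemma: each of the two tangent lines $m_1,m_2$ from $r$ to $CH_B$ lies in $\mathcal{A}$, passes through $r$ in the interior of $\triangle_v$, and is tangent to $CH_B$ at a single point $t_i$. Because $m_i$ cannot enter the interior of the convex body $CH_B$, it meets the arc-boundary of $\triangle_v$ only at $t_i$, where it touches without crossing. The two endpoints of the chord $m_i\cap\overline{\triangle_v}$ therefore lie on the straight sides $vb$ and $vb'$, and since a straight line meets each of these segments in at most one point, $m_i$ crosses $vb$ in exactly one point and $vb'$ in exactly one point.

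Since $r$ is outside $CH_B$, its tangent points $t_1,t_2$ are distinct, so at most one of them equals $b$. Pick $m\in\{m_1,m_2\}$ whose tangent point is different from $b$ and let $w := m\cap\ell$. Then $w\neq b$, because $m$ touches $CH_B$ only at a vertex other than $b$ and so does not pass through $b$; and $w\neq v$, because in general position the only lines of $\mathcal{A}$ through $v$ are the two tangents from $v$ at $b$ and $b'$, and $m$ is neither of these (its tangent point is neither $b$ nor $b'$). By the chord lemma, $w$ is an interior point of the segment $vb$. Because $\ell$ and $m$ both lie in $\mathcal{A}$, the point $w$ is a vertex of $\mathcal{A}$ lying strictly between $b$ and $v$ on $\ell$, which is the desired contradiction.

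The hardest step to pin down carefully is the chord lemma, since $\triangle_v$ is non-convex — its arc-boundary bulges inward — so one needs to rule out any second intersection of $m_i$ with the arc, and this is exactly what the tangency to the convex body $CH_B$ delivers. There is also a small case analysis when the tangent point $t_i$ of $r$ coincides with a corner $b$ or $b'$: if $t_i=b$, then $m_i$ itself contributes only the trivial endpoint $b$ on $vb$, and the interior crossing of $vb$ must instead be supplied by the other tangent $m_{3-i}$, whose tangent point is automatically different from $b$. The minimum-slope convention in the definition of the ancestor, together with the paper's general-position hypothesis, is exactly what prevents additional collinearities that could complicate this subcase.
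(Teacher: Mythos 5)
The paper never actually proves this statement: it is labelled an Observation and simply asserted (``It is observed that\dots''), so there is no in-paper argument to compare against. Your contradiction argument --- a red point $r$ in the open region $\triangle_v$ forces one of its two tangent lines $m$ to cut the segment $vb$ at a point $w$ interior to $vb$, and $w=\ell\cap m$ is then a vertex of $\mathcal A$ encountered before $v$ when walking from $b$, contradicting the definition of an ancestor --- is sound, and your ``chord lemma'' correctly isolates the only delicate point, namely that $\triangle_v$ is non-convex but a supporting line of $CH_B$ cannot cross its arc boundary, so $m\cap\overline{\triangle_v}$ is a single segment with one endpoint on each straight side. Two small remarks. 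First, your parenthetical ``its tangent point is neither $b$ nor $b'$'' overstates what your choice of $m$ guarantees (you only arranged $t\neq b$); but the conclusion $m\notin\{\ell,\ell'\}$ survives, since $m$ passes through $r$ in the \emph{interior} of $\triangle_v$ whereas $\ell$ and $\ell'$ meet $\overline{\triangle_v}$ only along its boundary segments --- this is also the cleaner way to rule out $w=v$. Relatedly, the minimum-slope convention plays no role: your argument shows the claim for the first vertex on \emph{any} line of $\mathcal A$ through $b$, so the closing sentence attributing the absence of collinearities to that convention is a red herring (general position alone does that work). Second, a caveat inherited from the paper rather than introduced by you: in Section~3, $\mathcal A$ denotes the arrangement clipped to $\mathcal P$ (each line kept only up to its first boundary crossing outward from the tangent point), so for the contradiction one also needs $w$ to survive the clipping, i.e.\ the portion of $m$ from its tangent point out to $w$ must stay inside $\mathcal P$. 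This is automatic when $\mathcal P$ is convex or when $\overline{\triangle_v}\subseteq\mathcal P$, but for a general simple polygon the paper silently assumes it; your proof is correct under the same implicit assumption the authors make.
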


In Figure~\ref{propagation-pic}, we have $\phi =\{v_{22}, v_{23}, v_{26}, v_{29}, v_{31}, v_{33}, v_{35}\}$. We set two labels called \emph{rank} and \emph{level} for the vertices in $\phi$.
According to Observation~\ref{empty-tri}, for every vertex $v \in \phi$ we set $rank(v)=0$. Further, since these vertices are processed in the first step of our strategy we set the level of all of them by $0$ as well. See~Table~\ref{our-table}.

Now that the initialization step is over, the iterative step of the strategy is to define the rank of neighbors of vertices in $\phi$ in the geometric graph $\mathcal{A}$. (Note that the vertices $u$ and $v$ are neighbors in graph $\mathcal A$ if they share an edge.) The overall idea is to propagate the rank of ancestors through vertices of $\mathcal A$ if these vertices are of types other than Type-I vertices (vertices that coincide with red points), and increase the propagated rank if we reach a Type-I vertex. The details are as follows.
For a vertex $v \in \phi$, let $\mathcal{N}(v)$ denote the set of  neighbors of $v$. Next we explain how to specify the rank and the level of these vertices.
For a vertex $u \in \mathcal{N}(v)$, we set $level(u)=1$. However, to specify $rank(u)$, type of $u$ should be considered as well:
If $u$ is a Type-I vertex (it coincides with a red point), then $rank(u)=1$. Otherwise, $rank(u)=0$.

Once the vertices with level $i$ is processed, the procedure continues with processing the vertices with level $i+1$. Let $\phi_{i+1}$ be the set of vertices whose level is $i+1$. Then for a vertex $v \in \phi_{i+1}$, we set the rank and the level of neighbors of $v$ in $\mathcal A$ as follows. Let $u$ be a neighbor of $v$ in $\mathcal A$. Then $level(u)=level(v)+1$, resulting in $level(u)=i+2$. To specify $rank(u)$, if $u$ is a Type-I vertex then $rank(u)=rank(v)+1$. Otherwise, $rank(u)=rank(v)$.
Note that deviation in the direction of propagating the rank of a Type-I vertex also increases the rank as follows. Throughout the ranking strategy, once we visit a Type-I vertex $x$ lying on the intersection of lines $\ell$ and $\ell'$ (for instances) we tag it as the red parent of $\ell$ and $\ell'$. Now let $y$ be a vertex on $\ell$, visited later than $x$ in the strategy. And assume $z_1$ and $z_2$ are the neighbors of $y$ to rank, where $z_2$ lies on $\ell$ while $z_1$ does not. See Figure~\ref{rank-det}. Then to rank $z_1$, since this vertex has a deviation from the red parent (it leaves the line $\ell$),  independent of type of the vertex $z_1$, there is an increase in rank. That is, $rank(z_1)=rank(y)+1$ due to the deviation. If $z_1$ is also a Type-I vertex then the rank  gets increased again by 1 due to the property.
Note that if the vertex $u$ has been ranked before, then $rank(u)$ is updated with the maximum of the old and the new rank. The level is also updated if the rank gets updated. When we reach a Type-III vertex (the vertex that lie on the boundary of $\mathcal P$)  we do not process their neighbors since the unprocessed neighbors lie on the boundary of $\mathcal P$, and there is no need to proceed. The procedure is finished when we have no more vertices to proceed.

\begin{figure}[h]
\centering
\includegraphics[width=10cm]{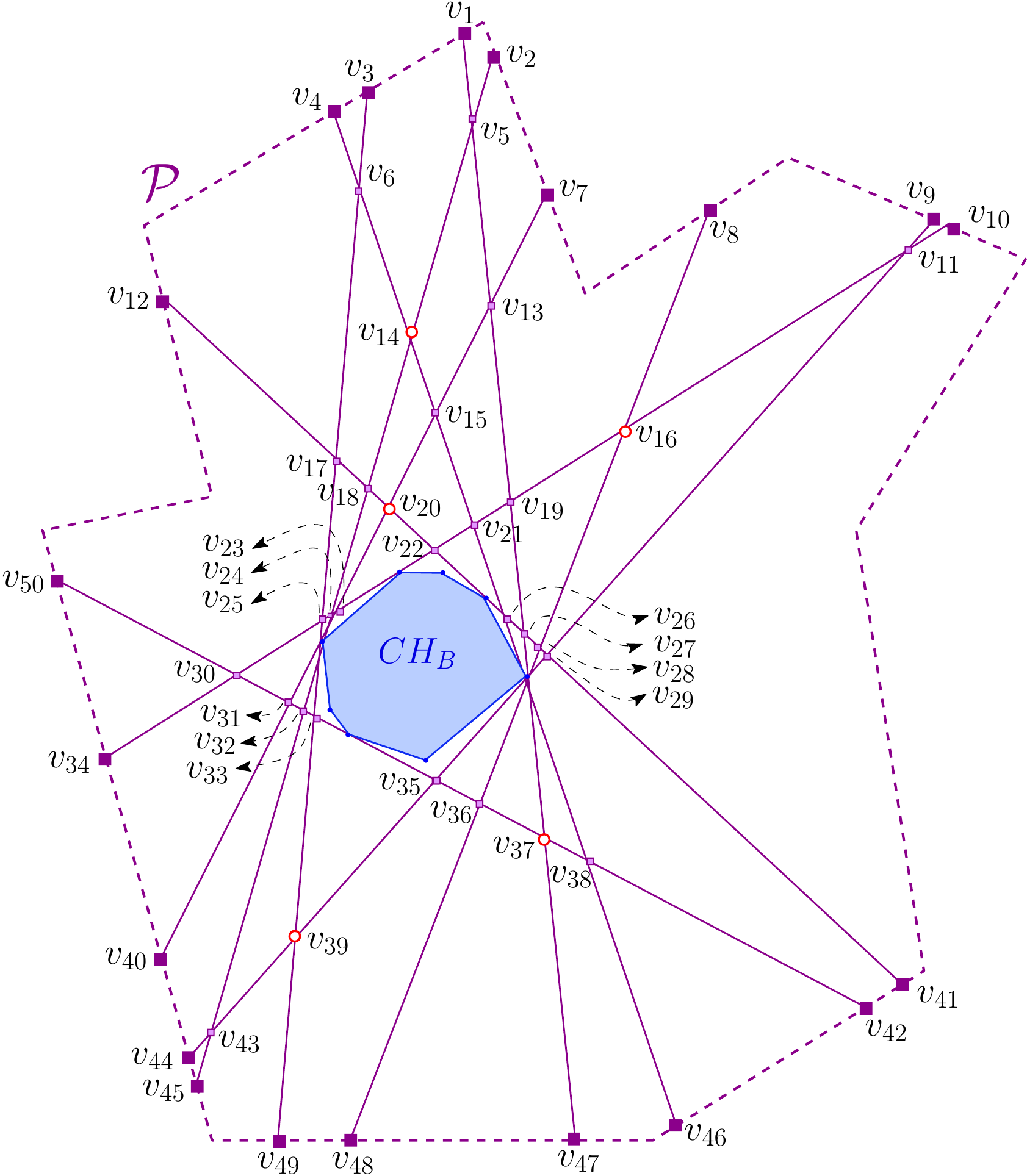}
\caption{Following the ranking strategy on arrangement $\mathcal A$ embedded in the polygonal environment $\mathcal P$. }
\label{propagation-pic}
\end{figure}

\begin{figure}[h]
\centering
\includegraphics[width=5cm]{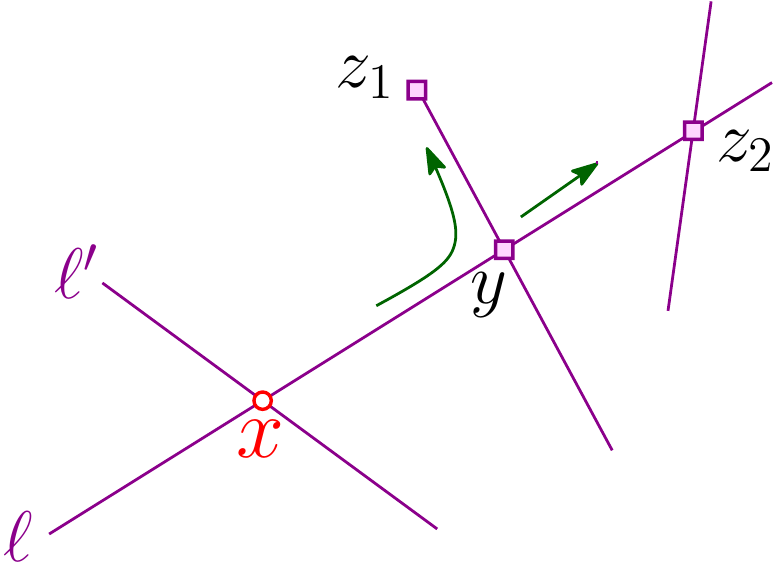}
\caption{Illustrating how deviation can increase the rank.}
\label{rank-det}
\end{figure}

\begin{lemma}
In an inscribed triangular separator with the vertex $v \in \mathcal A$, we have  $rank(v) <2$.
\end{lemma}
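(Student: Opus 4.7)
The plan is to interpret $rank(v)$ as essentially counting the number of red points strictly inside the semi-triangle $\triangle_v$, with at most one extra unit contributed when $v$ itself coincides with a red point, and then to argue geometrically that any inscribed separator with corner $v$ forces $\triangle_v$ to be red-free.

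First I would prove the geometric containment $\triangle_v \subseteq T$, where $T$ is an inscribed triangular separator with corner $v$. The two sides of $T$ meeting at $v$ bound a cone with apex $v$, and since $T \supseteq CH_B$ this cone contains $CH_B$, hence it is at least as wide as the cone formed by the two tangents from $v$ to $CH_B$. Because $\triangle_v$ is the region bounded by these tangents together with the clockwise chain from $b_{\mathrm{left}}$ to $b_{\mathrm{right}}$, every point of $\triangle_v$ either lies inside $CH_B$ or inside the tangent cone between $v$ and $CH_B$, and in either case it lies in $T$. Since $T$ is a separator, its interior contains no red point, so $\triangle_v$ contains no red point in its open region.

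Next I would show, by induction on the level assigned by the ranking strategy, that for every vertex $w$ processed by the algorithm
\[
rank(w) \;\le\; |R \cap \triangle_w| \;+\; \epsilon(w),
\]
where $\epsilon(w)=1$ if $w$ is a Type-I vertex and $0$ otherwise. The base case is Observation~\ref{empty-tri}: every ancestor has $rank=0$, a red-free semi-triangle, and is not itself a red point. In the inductive step, a rank is incremented only when either (i) the newly ranked vertex $u$ is Type-I, which is absorbed into $\epsilon(u)$, or (ii) the propagation deviates from a line $\ell$ whose red parent is $x$, in which case the deviation places $x$ inside the open semi-triangle of the destination vertex, so $|R \cap \triangle_{\cdot}|$ itself grows by at least one. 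The rule of updating a rank with the maximum of the old and the new value only preserves the inequality.

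Combining these two ingredients, if $v$ is a corner of an inscribed triangular separator then $|R \cap \triangle_v|=0$ while $\epsilon(v) \le 1$, so $rank(v) \le 1 < 2$ as desired. The main obstacle I anticipate is the bookkeeping in case (ii) of the induction: one must verify that whenever the propagation leaves a line whose red parent is $x$, the point $x$ does newly enter the open semi-triangle of the destination vertex. This calls for a careful local geometric case analysis around the deviation vertex, relating the tangent cone from the destination vertex to the position of $x$.
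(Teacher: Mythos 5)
Your plan shares the paper's two ingredients (the containment $\triangle_v \subseteq T$, and a bound on how rank can accumulate), but the inductive invariant you hang everything on, $rank(w) \le |R \cap \triangle_w| + \epsilon(w)$ with $\triangle_w$ taken as the open region, is false, and it fails precisely in the configuration the lemma is about. The breakdown is not in the deviation case (ii) that you flag as the main obstacle, but already in case (i): when a Type-I vertex $u$ on a tangent line $\ell$ receives its $+1$, you absorb it into $\epsilon(u)$, yet that increment is propagated unchanged to every later vertex $w$ on $\ell$. For such a $w$ (not Type-I, no deviation) your invariant demands $|R \cap \triangle_w| \ge 1$, but the red point $u$ lies on the segment joining $w$ to its tangent point, i.e.\ on the boundary of $\triangle_w$, which your open-region count excludes; if no red point lies strictly inside $\triangle_w$, the invariant reads $1 \le 0$. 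This is exactly the typical situation here: a canonical separator has each side passing through a red point, so a corner $v$ generally has a red point on $\overline{vb_{\mathrm{left}}}$ or $\overline{vb_{\mathrm{right}}}$ and $rank(v)=1$ (the paper's table contains many non-red vertices of rank $1$), whereas your invariant would force $rank(v)=0$. So the final step proves too much and rests on a false intermediate statement.

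The paper's proof does the accounting differently, and that difference is the substance: by general position each of the two tangent segments bounding $\triangle_v$ carries at most one red point, so a propagation path running along the boundary of $\triangle_v$ raises the rank by at most $1$; hence $rank(v)\ge 2$ could only arise from a path that gains an increment not attributable to those boundary segments, and such a path must meet a red point strictly inside $\triangle_v$, contradicting $\triangle_v \subseteq T$. If you want to keep an inductive formulation, the invariant must charge red points on the two tangent segments of $w$ separately from those strictly inside $\triangle_w$, and the geometric claim you correctly identify as delicate---that any increment beyond the boundary contribution forces a red point strictly inside the semi-triangle---still has to be established; that claim is the heart of the paper's argument as well, where it is asserted with a figure rather than proved in detail.
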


\begin{proof}
The proof is by contradiction. Let $v \in \mathcal{A}$ be a vertex of an inscribed triangular separator, and assume by contradiction that $rank(v) \geq 2$. Consider the tangent points of the lines trough $v$ on $CH_B$. Let $b_{\mathrm{left}}$ and $b_{\mathrm{right}}$ be these blue tangent points. Let $a$ be an ancestor in $\triangle_v$. Since $rank(v) \geq 2$ then $a \neq v$. Further, since points are in general position, then there exists at most one red point on the segment $\overline{vb_{\mathrm{left}}}$. The same is true for the segment $\overline{vb_{\mathrm{right}}}$. Thus, using the path $\overline{vb_{\mathrm{left}}}$ or $\overline{vb_{\mathrm{right}}}$, $rank(a)$ is at most increased by $1$ to get to $v$, resulting that $rank(v)$ gets at most $1$. Thus, there should exist a path from $a$ to $v$ that increases the propagated rank more. Therefore, this path should visit at least one red point. See Figure~\ref{proof-1}. However, in this case there should exist at least one red point in $\triangle_v$, contradicting that there is an inscribed triangular separator (having no red point inside) with a vertex on $v$.
\end{proof}
\begin{figure}[h]
\centering
\includegraphics[width=6cm]{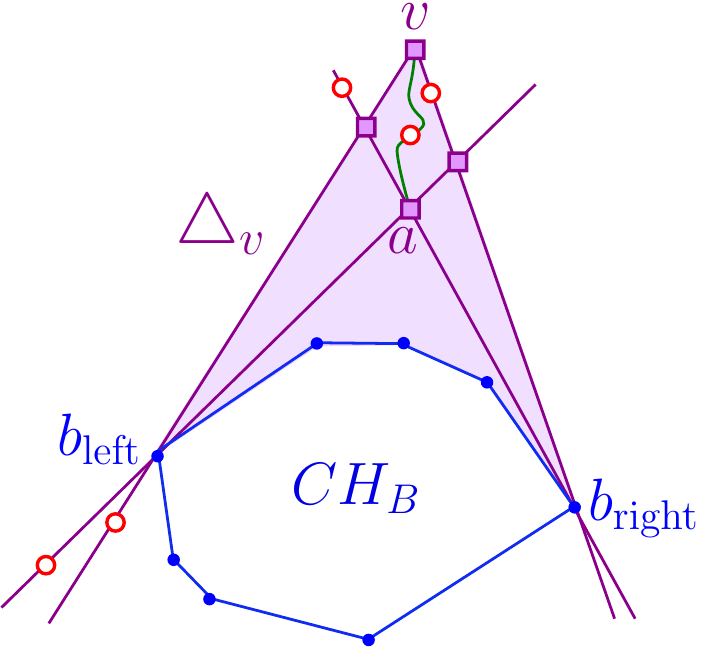}
\caption{Studying $rank(v)$.}
\label{proof-1}
\end{figure}

We also need to store another label called \emph{extreme} for vertices of Type-I, Type-II, and Type-IV (we do not store this label for the vertices on the boundary of $\mathcal P$ unless they are also a vertex in the arrangement $\mathcal A$ in the plane). Consider a vertex $v \in \mathcal A$ and the directed half-line from $v$ and tangent to $CH_B$. Then walking in the direction of this half-line, we define $extreme(v)$ as the last vertex whose rank is less than $2$ and is seen in the direction of walking. See Table~\ref{our-table} for a complete calculation of the labels rank, level, and extreme. We will see how to use these pre-computed labels in the main algorithm. For now, let us examine the time complexity of the ranking strategy.

\begin{theorem}
The ranking strategy takes $O(r \cdot (r+c_B+k))$ time.
\end{theorem}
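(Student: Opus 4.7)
The plan is to decompose the runtime of the ranking strategy into three independently paid portions matching the three summands in $r\cdot(r+c_B+k)$.

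First I would account for the construction of the arrangement $\mathcal{A}$ in the unrestricted plane. For each of the $r$ red points, a single scan over the $c_B$ edges of $CH_B$ yields its two tangents to the hull, costing $O(r\cdot c_B)$ in total for the $2r$ tangent lines; this is what pays for the $r\cdot c_B$ term. Constructing the arrangement of the $2r$ tangent lines is then $O(r^2)$ using any standard incremental line-arrangement algorithm, which also gives $O(r^2)$ vertices, edges and faces and pays for the $r^2$ term.

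Next I would account for the interaction with $\mathcal{P}$. For each of the $2r$ tangent lines I would walk along the $k$ edges of the boundary of $\mathcal{P}$ and keep only the first intersection encountered when moving outward from $CH_B$, discarding everything farther; this costs $O(k)$ per line and $O(rk)$ overall, produces the at most $2r$ Type-III vertices already identified in the paper, and trims the portion of $\mathcal{A}$ inside $\mathcal{P}$ to a planar geometric graph of total combinatorial complexity $O(r^2+rk)$.

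Finally I would bound the ranking phase proper. Initialising $\phi$ costs $O(r+c_B)$: for each tangent point on $CH_B$ pick the line of minimum slope through it and read off its two neighbours in $\mathcal{A}$, and note that only $2r$ tangent lines are distributed over all tangent points in total. The iterative step is a BFS through the clipped arrangement graph, whose vertices have degree at most four. By the preceding lemma any rank $\geq 2$ is already useless for the main algorithm, so I would cap propagation at rank $2$; then each vertex is updated only $O(1)$ times and contributes $O(1)$ work, for a total of $O(r^2)$. The \emph{extreme} labels I would compute in one pass along each of the $2r$ lines, sweeping the $O(r)$ vertices on the line in geometric order while tracking the most recent vertex of rank $<2$, for $O(r)$ per line and $O(r^2)$ overall. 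Summing the three charges gives $O(r\cdot c_B) + O(r^2+rk) + O(r^2) = O(r\cdot(r+c_B+k))$.

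The main obstacle I anticipate is the propagation step: because the ranks of already-visited vertices may be \emph{updated} when those vertices are reached from a different path, without a cap one could in principle revisit each vertex many times and blow past the $O(r^2)$ budget. Pinning that cap at rank $2$ via the lemma above, and verifying that the extreme labels can genuinely be read off in a single per-line sweep once the BFS is complete, are where the argument needs the most care; the rest of the accounting is routine once the arrangement has been clipped against $\mathcal{P}$.
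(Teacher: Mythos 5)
Your proposal is correct and follows the same overall decomposition as the paper: $O(r\cdot k)$ for clipping the $2r$ tangent lines against $\mathcal P$, $O(r\cdot c_B)$ for the hull-related work, and $O(r^2)$ each for the arrangement/BFS ranking and for the extreme labels. Two differences are worth noting. First, the accounting of the $r\cdot c_B$ term: you charge it to computing the tangents from the red points to $CH_B$, whereas the paper charges it to the initialization step, walking $O(r)$ along a line from each of the $c_B$ tangent points to find its two ancestors; both are legitimate, and in fact both costs are incurred, so the bound is unaffected. Second, and more substantively, you explicitly confront the issue that the paper glosses over: since the procedure updates $rank(u)$ to the maximum of old and new values when a vertex is reached again, a vertex may be touched repeatedly, and the paper's bound of ``sum of degrees $=O(r^2)$'' tacitly assumes each vertex propagates to its neighbours only $O(1)$ times. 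Your fix --- capping propagation at rank $2$, justified by the lemma that any vertex of rank at least $2$ is useless for the main algorithm, so that each vertex's rank strictly increases at most a constant number of times and hence re-propagates $O(1)$ times at degree at most four --- supplies the missing argument and makes the $O(r^2)$ charge for the iterative step airtight. Your per-line sweep for the extreme labels is also a slightly cleaner formulation of the paper's ``top-most vertex'' walk, at the same $O(r)$ cost per line.
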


\begin{proof}
Embedding arrangement $\mathcal A$ in the polygonal environment $\mathcal P$ with $k$ vertices, and computing the first intersection for each line of the arrangement takes $O(r \cdot k)$ time.
Then, the ranking strategy contains two steps: the initialization step and the iterative one.
In the initialization step, for each vertex $b$ of $CH_B$ (which is a tangent point) we calculate  the two  corresponding ancestors. To do so we proceed as follows. If there is just a single line in $\mathcal A$ passing through $b$, then ancestors of $b$ in the directions upwards and downwards can be computed in $O(r)$ time. Otherwise, if there are several lines in $\mathcal A$ passing through $b$, then we take the line with the minimum slope, and compute the two corresponding ancestors of $b$ on this line. This also takes $O(r)$ time. Assuming $c_B$ to be the size of $CH_B$, then finding the set of ancestors takes  $O(r \cdot c_B)$ time.

Having completed the initialization step, in the iterative step we compute the rank and level of neighbors of the vertices that are ranked just a step before. The ranking strategy specifies the rank and the level of each vertex in $O(1)$ time once the vertex takes its turn. Thus, the iterative step takes $\sum_{i=1}^{4r^2+2r} \deg(v_i)$, where $v_i$ is a vertex of the geometric graph $\mathcal A$, and $\deg(v_i)$ denotes its degree. This summation equals 2 times the number of edges that is $O(r^2)$.

We also compute the extreme label for vertices in $\mathcal A$. To do so
for each line $\ell$ in $\mathcal A$, we find the top-most vertex in $\mathcal A$ in $O(r)$ time. Once  we reach the top-most vertex on $\ell$ (let it be $v_\ell$) we can compute $extreme(v_\ell)$ in $O(r)$ time, and use its value for all vertices on $\ell$ from $v_\ell$ till $extreme(v_\ell)$. Thus, computing the extreme labels takes $O(r)$ time per line, leading to the total time of $O(r^2)$.

Therefore, the ranking strategy takes $O(r \cdot (r+c_B+k))$ time in total.
\end{proof}

\begin{table}[h!]
  \centering
  \begin{tabular}{  l | l | l | l| l| l| l| l| l|l|l }
     & $v_1$ & $v_2$ & $v_3$ & $v_4$ & $v_5$ & $v_6$ & $v_7$ & $v_8$ & $v_9$ & $v_{10}$\\
     \hline
    Rank &  3  &  3 &  3 &  3 & 3  & 3  & 1  &  1 & 0  & 0  \\
    Level &  5  & 5  & 5  &  5 & 4  & 4  & 4  & 3  & 2  & 2  \\
    Extreme  &  -  & -  &  - & -  & $v_{37},v_{32} $ & $v_{38},v_{39} $ & - & -  &  - & -   \\
    \hline
    & $v_{11}$ & $v_{12}$ & $v_{13}$ & $v_{14}$ & $v_{15}$ & $v_{16}$ & $v_{17}$ & $v_{18}$ & $v_{19}$ & $v_{20}$\\
     \hline
    Rank &   0 & 1  & 1  & 3  &  1 & 1  &  1 & 1  & 1  & 1  \\
    Level &   1 &  4 & 3  & 3  & 2  & 2  &  3 & 2  &  3 & 1  \\
    Extreme  &  $v_{30},v_{43}$  & -  & $v_{31},v_{37}$  & $v_{38},v_{43}$  & $v_{38},v_{31}$  & $v_{30},v_{36}$  & $v_{29},v_{39}$  & $v_{29},v_{43}$  & $v_{30},v_{37}$  & $v_{29},v_{31}$  \\
    \hline
    & $v_{21}$ & $v_{22}$ & $v_{23}$ & $v_{24}$ & $v_{25}$ & $v_{26}$ & $v_{27}$ & $v_{28}$ & $v_{29}$ & $v_{30}$\\
     \hline
    Rank &  0  &  0 &  0 & 1  & 0  & 0  & 0  & 0  & 0  & 0  \\
    Level &  1  & 0  &  0 & 3  &  2 &  0 & 1  & 1  & 0  & 1  \\
    Extreme  &  $v_{30},v_{38}$  & $v_{30},v_{29}$  & $v_{11},v_{31}$  & $v_{11},v_{43}$  & $v_{11},v_{39}$  & $v_{17},v_{38}$  & $v_{17},v_{37}$  & $v_{17},v_{36}$  & $v_{17},v_{35}$  & $v_{11},v_{38}$  \\
    \hline
    & $v_{31}$ & $v_{32}$ & $v_{33}$ & $v_{34}$ & $v_{35}$ & $v_{36}$ & $v_{37}$ & $v_{38}$ & $v_{39}$ & $v_{40}$\\
     \hline
    Rank &  0  & 0  & 0  &  0 & 0  & 0  &  1 &1   &  1 & 0  \\
    Level &  0  &  1 &  0 &  2 &  0 & 1  & 2  & 3  & 1  &  1 \\
    Extreme  &  $v_{38},v_{13}$  & $v_{18},v_{38}$  & $v_{17},v_{38}$  & -  & $v_{30},v_{11}$  & $v_{30},v_{16}$  & $v_{30},v_{13}$  & $v_{30},v_{15}$  &  $v_{17},v_{11}$ &  - \\
    \hline
    & $v_{41}$ & $v_{42}$ & $v_{43}$ & $v_{44}$ & $v_{45}$ & $v_{46}$ & $v_{47}$ & $v_{48}$ & $v_{49}$ & $v_{50}$\\
     \hline
    Rank &  0  & 1  &  1 & 1  &  1 &  1 &  1 &  0 & 1  &  0 \\
    Level &  1  & 4  &  2 &  3 & 3  & 4  & 3  &  2 & 2  &  2 \\
    Extreme  &  -  & -  & $v_{18},v_{11}$  & -  & -  &  - & -  & -  &  - & -  \\
  \end{tabular}~\label{our-table}
  \caption{Information prepared by the ranking strategy. }
\label{our-table}
\end{table}

Now that the ranking strategy is completed, we will see how to use it in the main algorithm.

\section{Main algorithm}
The main algorithm is composed of a ranking strategy as a pre-processing stage, and a processing stage.

Once ranking strategy is over, in the main algorithm to compute $h_\triangle$ inscribed triangular separators we proceed as follows. We take a vertex $v \in \mathcal A$ (vertices other than Type-III vertices), and check $rank(v)$. According to Observation~\ref{empty-tri}, if $rank(v)<2$ then $v$ is a candidate location for a vertex of an inscribed triangular separator, and we continue by processing $v$. Otherwise, if $rank(v) \geq 2$ there is no need to process $v$. So assume that $rank(v)<2$. There are two lines passing through $v$. See Figure~\ref{walk-fig}. Let $\ell$ and $\ell'$ be these lines. Now we jump to $extreme(v)$ on $\ell$. (We do this procedure for $extreme(v)$ on $\ell'$ as well.) Numbering vertices on $\ell$ top down, let $extreme(v)$ be the vertex $u_j$ on $\ell$. Note that $rank(u_j)<2$ by the definition of the extreme label. Now we follow $u_j$ to get to the tangent point from $u_j$ to $CH_B$ on $\ell'$. Numbering vertices on $\ell'$ top down, let $w_i$ be this tangent point on $\ell'$. Using the structure of arrangement $\mathcal A$ we can reach $w_i$ in $O(1)$ time. According to Observation~\ref{empty-tri}, if $rank(w_i)<2$ then $w_i$ is a candidate location for a vertex of the separator, and therefore the triangle
$\overset{\triangle}{w_i v u_j}$ is an inscribed triangular separator.

\begin{figure}[h]
\centering
\includegraphics[width=7cm]{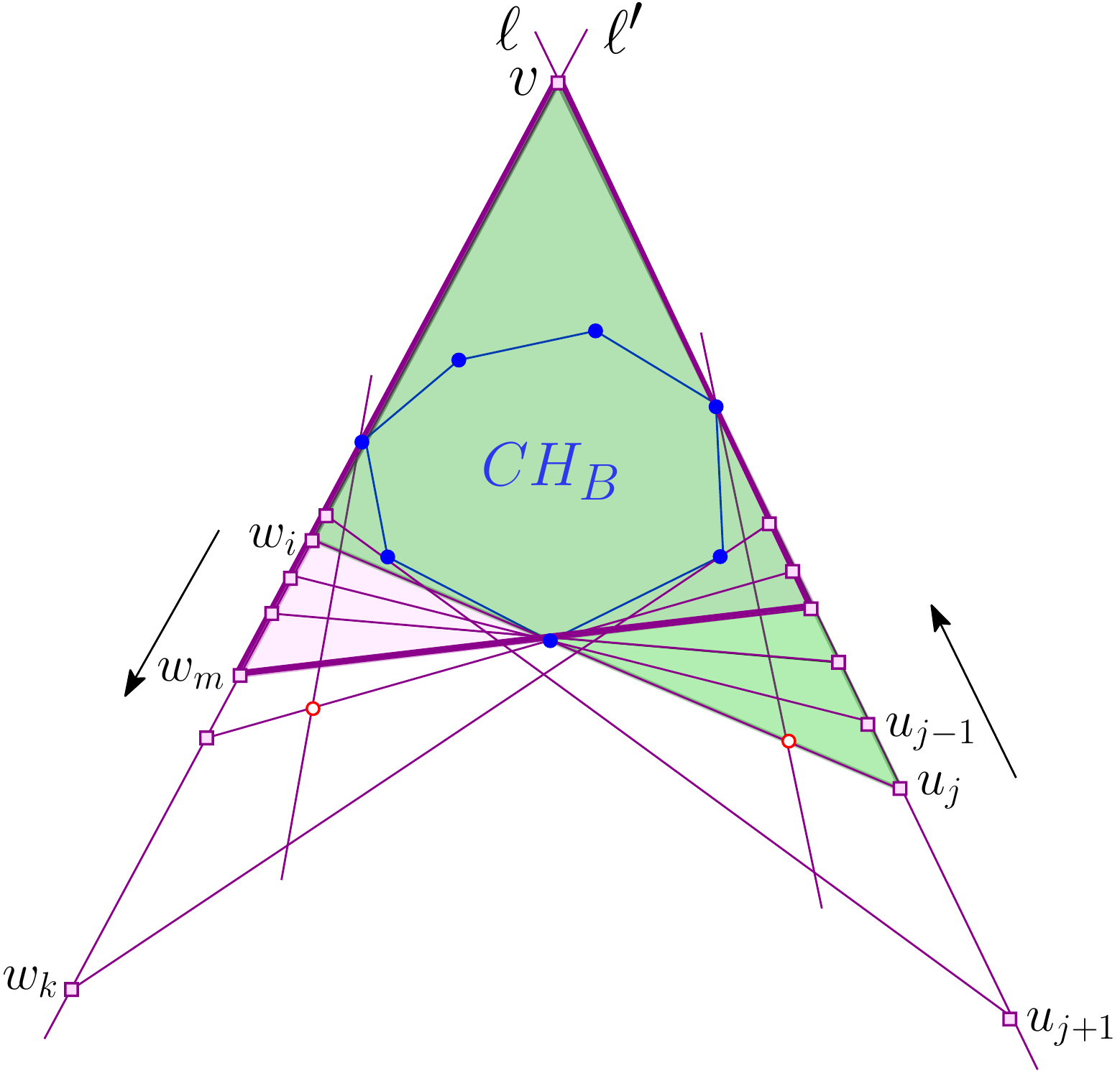}
\caption{Walking on $\mathcal A$ to report the inscribed triangular separators.}
\label{walk-fig}
\end{figure}

To find the rest of inscribed triangular separators with a vertex on $v$, we walk on $\ell'$ from $w_i$ downwards until we reach a vertex whose rank is greater than or equal to 2. Let $w_k$ be this vertex. We get the following result.

\begin{lemma}
The triangle $\overset{\triangle}{w_m v u(w_m)}$ is an inscribed triangular separator, where $i \leq m < k$ and $u(w_m)$ is the tangent point from $w_m$ on the line $\ell$.
\end{lemma}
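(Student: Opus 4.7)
My proof verifies the three defining conditions of an inscribed triangular separator for $\overset{\triangle}{w_m v u(w_m)}$: that it encloses $CH_B$, that its interior contains no red point, and that it lies inside $\mathcal{P}$. The first condition is essentially by construction: both $\ell$ and $\ell'$ are tangents from $v$ to $CH_B$, and the third side lies on the tangent line from $w_m$ to $CH_B$ by the very definition of $u(w_m)$; hence $CH_B$ is enclosed and every blue point is inside the triangle.

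The core of the argument is the no-red-point condition. I would first establish rank bounds on the three vertices: $rank(v) < 2$ holds by hypothesis, $rank(w_m) < 2$ holds because $w_k$ is the first vertex with rank at least $2$ encountered when walking downward on $\ell'$ from $w_i$, and $rank(u(w_m)) < 2$ follows because $u(w_m)$ lies on $\ell$ between $v$ and $extreme(v) = u_j$, a range in which the extreme label guarantees every vertex has rank less than $2$. Then I would decompose the interior of the triangle outside $CH_B$ into three corner regions, one at each triangle vertex, each contained in the semi-triangle of the corresponding vertex. By adapting the contradiction used in the preceding lemma, a red point in any corner region would force a rank-propagation path that increases the base rank of the corresponding vertex by at least two -- one from traversing the red point and one from the mandatory direction change needed to reach the corner -- contradicting the bounds just established.

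For containment in $\mathcal{P}$, recall that $\mathcal{A}$ is restricted so that each of its lines is truncated at its first intersection with the boundary of $\mathcal{P}$. The vertices $v$, $w_m$, and $u(w_m)$ all lie inside $\mathcal{P}$, and each side of the triangle is a chord of $\mathcal{P}$ along a line of $\mathcal{A}$ that, by the truncation, cannot cross the boundary of $\mathcal{P}$ between the relevant pair of vertices; since $\mathcal{P}$ is simple, the triangle lies entirely in $\mathcal{P}$.

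The main obstacle is the no-red-point step. The previous lemma proves only the necessary direction, namely that if a triangle is an inscribed separator then each of its vertices has rank less than $2$; what is needed here is a constrained converse, in which the combined rank bounds on $v$, $w_m$, and $u(w_m)$ together with the extreme property jointly rule out a red point in every corner region. I expect this to require a careful case analysis over the possible positions of a hypothetical red point and which vertex's propagation path it would inflate, and it may also be convenient to exploit the monotonic sweep of the triangle as $m$ varies, reducing to the base case $m = i$ by an inductive argument along the walk on $\ell'$.
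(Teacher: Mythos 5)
Your skeleton matches the paper's: partition the triangle into $CH_B$ plus the three corner regions $\triangle_v$, $\triangle_{w_m}$, $\triangle_{u(w_m)}$ and argue each is red-free using the rank labels. But the two places where your proposal departs from the paper are exactly where it runs into trouble. First, the paper does not attempt a rank bound on $u(w_m)$ at all: it uses $rank(u_j)<2$ (with $u_j = extreme(v)$) together with the containment $\triangle_{u(w_m)} \subseteq \triangle_{u_j}$. Your alternative claim that $rank(u(w_m))<2$ is shaky, because $u(w_m)$ is the intersection of the second tangent line from $w_m$ with $\ell$, and that tangent line belongs to $\mathcal A$ only when $w_m$ is itself a red point; so $u(w_m)$ need not be a ranked vertex of $\mathcal A$ at all. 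Moreover, $extreme(v)$ is defined as the \emph{last} vertex of rank less than $2$ seen along the half-line, which by itself does not certify the ranks of intermediate vertices. The paper's containment argument sidesteps both issues, and you should adopt it.

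Second, and more importantly, the step you yourself flag as the main obstacle is left unproven, and it is the decisive one: the paper's proof simply invokes the converse direction, namely that $rank(x)<2$ implies $\triangle_x$ contains no red point (``by the structure''), applied to $v$, $w_m$, and $u_j$; the preceding lemma only gives the other direction. Your sketched justification (``$+1$ for traversing the red point, $+1$ for the mandatory deviation'') bounds rank increments along particular propagation paths, but it does not show that an arbitrary red point strictly inside a corner region necessarily inflates the rank of that specific vertex, so as written the proof is incomplete at its core, not merely terse. (Your extra paragraph on containment in $\mathcal P$ addresses something the paper's proof is silent about, which is a genuine addition; note, however, that the side $\overline{w_m u(w_m)}$ lies on the tangent from $w_m$, which is generally not a line of $\mathcal A$, so the truncation of $\mathcal A$ at the boundary of $\mathcal P$ does not by itself keep that side inside $\mathcal P$.)
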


\begin{proof}
The triangle $\overset{\triangle}{w_m v {u(w_m)}}$ can be partitioned into 4 pieces: $\triangle_v$ (the semi triangle based on $v$), $\triangle_{w_m}$, $\triangle_{u(w_m)}$, and $CH_B$. It is clear that $\overset{\triangle}{w_m v u(w_m)}$ contains all blue points. Hence, to show that $\overset{\triangle}{w_m v u(w_m)}$ is a separator it is sufficient to show that it contains no red points inside. Or let's say each partitioning part of it contains no red points. Since we are investigating the case that $CH_B$ does not contain any red points (otherwise, the decision version of triangular separability fails~\cite{conv-sep}), we focus on the other 3 parts. Note that $rank(v)<2$ and $rank(w_m)<2$ by the structure. Hence, $\triangle_v$ and $\triangle_{w_m}$ contains no red points. Further, we know that $rank(u_j)<2$, and therefore $\triangle_{u_j}$ does not contain any red points. It is easy to see that $\triangle_{u(w_m)}$ is a sub-region of $\triangle_{u_j}$. Thereby, it contains no red points. Now putting the non-red regions $CH_B, \triangle_v, \triangle_{w_m},$ and $\triangle_{u(w_m)}$ together, we get the triangular separator $\overset{\triangle}{w_m v u(w_m)}$.
\end{proof}

Thus, we can compute inscribed triangular separators with a vertex on $v$, each one in $O(1)$ time by using the information we have gathered in the ranking strategy. Assuming $h_\triangle$ to be the number of inscribed triangular separators, we can report them totally in $O(r \cdot (r+c_B+k)+h_\triangle)$ time. Assuming $r << b$ which is the case that often happens in practice, our proposed algorithm is a fast output-sensitive procedure to report the inscribed triangular separators.

\begin{theorem}
Given a set $R$ of $r$ red points and a set $B$ of $b$ blue points in a polygonal environment $\mathcal P$, we can compute the inscribed triangular separators in $O(r \cdot (r+c_B+k)+h_\triangle)$ time, where $c_B$ is the size of convex hull of $B$ and $h_\triangle$ is the number of these separators.
\end{theorem}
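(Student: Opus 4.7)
The plan is to assemble the claimed complexity from three ingredients: the cost of the ranking-strategy pre-processing, the cost of scanning the $O(r^2)$ vertices of $\mathcal{A}$ during the reporting stage, and a charging argument that yields the output-sensitive term $h_\triangle$. I would present these as a single bound at the end by summing.

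First I would invoke the theorem in Section~\ref{sec-rank}, which already charges $O(r\cdot(r+c_B+k))$ for embedding $\mathcal{A}$ into $\mathcal{P}$, trimming lines at their first intersection with $\partial\mathcal{P}$, computing the ancestor set, propagating the \emph{rank} and \emph{level} labels, and filling the \emph{extreme} labels. This accounts for the pre-processing term in the stated bound.

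Next I would analyse the reporting loop. For every vertex $v\in\mathcal{A}$ that is not of Type-III, the $O(1)$ test $rank(v)<2$ is performed; since $\mathcal{A}$ has $O(r^2)$ vertices, this global scan costs $O(r^2)$. Whenever $rank(v)<2$, on each of the two lines $\ell,\ell'$ through $v$ the algorithm jumps to $extreme(v)$, follows the tangent pointer to obtain the initial third vertex $w_i$, and walks along $\ell'$ until a vertex of rank $\geq 2$ is met. By the preceding lemma every intermediate vertex $w_m$ produces a valid inscribed triangular separator $\overset{\triangle}{w_m\, v\, u(w_m)}$, reported in $O(1)$ time using the stored \emph{extreme} and tangent-pointer information. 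Hence the per-separator work is $O(1)$.

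The subtle step, and the one I expect to be the main obstacle, is completeness and non-overcounting. I must argue that (i) every canonical inscribed triangular separator is actually enumerated, and (ii) the total number of triples processed is $O(h_\triangle)$. For (i), I would use the lemma showing that all three vertices of a canonical separator have rank $<2$; then the definition of \emph{extreme} plus the structure of $\mathcal{A}$ ensures that the third vertex appears on the $\ell'$-walk before the first rank-$\geq 2$ obstruction, so no valid separator is missed. For (ii), I would fix a canonical orientation rule that charges each separator to exactly one of its three vertices; the $\ell'$-walk starting from the chosen vertex then visits exactly the candidates charged to it, and the cost per visited vertex is $O(1)$, so the total reporting work is $O(h_\triangle)$. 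Summing the pre-processing $O(r\cdot(r+c_B+k))$, the global scan $O(r^2)$, and the output cost $O(h_\triangle)$ collapses to $O(r\cdot(r+c_B+k)+h_\triangle)$, which is the stated bound.
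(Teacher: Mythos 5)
Your proposal follows essentially the same route as the paper: it combines the ranking-strategy theorem ($O(r\cdot(r+c_B+k))$ pre-processing), the $O(r^2)$ scan of the arrangement vertices, and the $O(1)$-per-separator reporting walk justified by the rank-$<2$ lemmas, exactly as the paper does. Your added remarks on completeness and non-overcounting go slightly beyond the paper, which leaves these points implicit, but they do not change the argument's structure.
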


\section{Maximum triangular separator}
So far we have focused on finding triangular separators that separate all points in $B$ from $R$. Although this is the case that is often the desirable goal, it may not always possible due to distribution of points. For example, red points can tightly enclose $CH_B$ so that no triangular separator exists. In this section we focus on this situation, and we  adjust the goal, this time to find the triangle that separates most of the blue points from the red point, and it falls in a convex polygonal environment $\mathcal P$. We call this separator \emph{maximum triangular separator}, and design a constant-factor approximation algorithm to find it.

So assume that red points  tightly cover $CH_B$. However, no red point is inside it. By using the algorithm in~\cite{conv-sep} we can compute the convex polygon with the minimum number of edges separating  all the blue points from the red points. Let $\mathcal C$ denote this polygon. Note that $\mathcal C$ always exists since $CH_B$ has no red points inside, and $\mathcal C$ is $CH_B$ in the worst case. Now we enlarge $\mathcal C$ as follows. For each edge $e$ of $\mathcal C$, we move $e$ outwards in the orthogonal direction of its orientation until it hits a red point or the boundary of $\mathcal P$. Without loss of generality, from now we use $\mathcal C$ to refer to this enlarged convex separator. We have the following result.

\begin{lemma}~\label{le-1}
Let $\nabla_\mathrm{vrtx}$ be a maximum triangular separator with vertices chosen from vertices of $\mathcal C$, and let $\nabla_{\mathrm{OPT}}$ be a maximum triangular separator. Then, $|\nabla_\mathrm{vrtx}| \leq |\nabla_{\mathrm{OPT}}| \leq 7 |\nabla_\mathrm{vrtx}|$, where $|\mathcal S|$ denotes the number of blue points in $\mathcal S$.
\end{lemma}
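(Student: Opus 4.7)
The lower bound $|\nabla_\mathrm{vrtx}| \leq |\nabla_{\mathrm{OPT}}|$ is immediate: any triangle with three corners at vertices of $\mathcal C$ lies inside $\mathcal C$ by convexity, so it is contained in $\mathcal P$ and disjoint from the red set, making it a valid triangular separator; hence it is dominated in cardinality by the overall optimum $\nabla_{\mathrm{OPT}}$.

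For the upper bound, my plan is a $7$-way decomposition plus pigeonhole. First, I extend each of the three edges of $\nabla_{\mathrm{OPT}}$ to a full line. Being in general position, these three lines cut the plane into seven open regions: the central one is $\nabla_{\mathrm{OPT}}$ itself, together with three ``across-edge'' regions and three ``beyond-vertex'' regions. Intersecting this arrangement with the convex polygon $\mathcal C$ yields a partition of $\mathcal C$ into at most seven (possibly empty) convex pieces $\mathcal C_0, \mathcal C_1, \ldots, \mathcal C_6$, where $\mathcal C_0 = \nabla_{\mathrm{OPT}} \cap \mathcal C$ carries exactly the $|\nabla_{\mathrm{OPT}}|$ blue points inside the optimum.

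The key step is to exhibit, for every piece $\mathcal C_i$, a triangle $T_i$ whose three corners lie among the vertices of $\mathcal C$ and which encloses the blue points of $\mathcal C_i$ (so $|T_i| \geq |\mathcal C_i|$). Granting this, each $T_i \subseteq \mathcal C$ is itself a valid triangular separator (no red points inside, by convexity of $\mathcal C$, and it sits inside $\mathcal P$), so $|T_i| \leq |\nabla_\mathrm{vrtx}|$; summing over the seven pieces,
\[
|\nabla_{\mathrm{OPT}}| \;=\; |\mathcal C_0| \;\leq\; \sum_{i=0}^{6} |\mathcal C_i| \;\leq\; \sum_{i=0}^{6} |T_i| \;\leq\; 7\,|\nabla_\mathrm{vrtx}|,
\]
which is the required upper bound.

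The main obstacle is constructing the triangles $T_i$. Each piece $\mathcal C_i$ is a convex polygon whose boundary consists of a contiguous arc of $\partial \mathcal C$ together with one, two, or three chords inherited from the three cutting lines. My plan is, for each piece, to take two corners of $T_i$ as the vertices of $\mathcal C$ adjacent to the endpoints of the $\mathcal C$-arc bounding $\mathcal C_i$, and the third corner as a further vertex of $\mathcal C$ sitting on the opposite side of the chord(s); convexity of $\mathcal C$ is then used to verify that the resulting triangle covers the blue content of $\mathcal C_i$. A case analysis driven by the number of chords bounding the piece is unavoidable, and the central piece $\mathcal C_0$, which can be touched by all three chords simultaneously, is expected to absorb most of the technical work.
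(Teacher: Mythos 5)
Your lower-bound argument is fine, but the upper-bound plan contains a genuine flaw, and it is visible already in your own inequality chain. All blue points of $\nabla_{\mathrm{OPT}}$ lie in the single central piece $\mathcal C_0=\nabla_{\mathrm{OPT}}\cap\mathcal C$; the other six pieces of your arrangement lie outside $\nabla_{\mathrm{OPT}}$ and contribute nothing to bounding $|\nabla_{\mathrm{OPT}}|$. So your chain really reduces to the single claim $|\mathcal C_0|\leq|T_0|\leq|\nabla_\mathrm{vrtx}|$, i.e.\ that \emph{one} triangle with corners at vertices of $\mathcal C$ encloses all blue points of $\nabla_{\mathrm{OPT}}\cap\mathcal C$. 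If that ``key step'' were true, you would have proved $|\nabla_{\mathrm{OPT}}|\leq|\nabla_\mathrm{vrtx}|$, a $1$-approximation, which is clearly too strong --- and indeed the claim is false. Take $\mathcal C$ to be a square with the blue points clustered just inside the midpoints of its four edges, the red points placed so that a large triangle clipping off only the four edge midpoints is feasible and inside $\mathcal P$: then $\nabla_{\mathrm{OPT}}$ contains essentially all blue points, while every triangle spanned by three of the four vertices of $\mathcal C$ misses about half of them. More generally, a convex region inside $\mathcal C$ need not fit (even blue-point-wise) into any triangle spanned by vertices of $\mathcal C$, so the covering triangles $T_i$ you hope to construct do not exist for the central piece. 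The decomposition of $\mathcal C$ by the three supporting lines of $\nabla_{\mathrm{OPT}}$ therefore does no work: it splits the \emph{complement} of the optimum into pieces, not the optimum itself.

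The paper's proof subdivides exactly the region your argument leaves intact: it looks at $\nabla_{\mathrm{OPT}}\cap\mathcal C$, argues (using that every edge of $\mathcal C$ carries a red point or touches $\partial\mathcal P$, so each triangle side crosses $\partial\mathcal C$ at most twice) that this intersection is a convex polygon with at most nine vertices, and then partitions it into at most seven triangles, each of which is compared against $\nabla_\mathrm{vrtx}$; the factor $7$ is the number of triangles in that triangulation, and the pigeonhole is applied \emph{inside} $\nabla_{\mathrm{OPT}}\cap\mathcal C$, distributing the optimum's blue points among seven vertex-anchored triangles. To repair your argument you would have to replace your key step by a statement of this kind: not ``each arrangement piece fits in one vertex-triangle,'' but ``the single piece $\mathcal C_0$ can be covered by at most seven triangles, each containing at most $|\nabla_\mathrm{vrtx}|$ blue points,'' which is a different (and the essential) combinatorial fact.
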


\begin{proof}
Trivially $|\nabla_\mathrm{vrtx}| \leq |\nabla_{\mathrm{OPT}}|$.
To prove the second inequality, consider the intersections of $\nabla_{\mathrm{OPT}}$ and $\mathcal C$. The optimal separator  $\nabla_{\mathrm{OPT}}$ intersects $\mathcal C$ in at most six points.
Since $\mathcal C$ contains all blue points, and has a red point on each side (or touching the boundary of $\mathcal P$), then intersection of $\nabla_{\mathrm{OPT}}$ and $\mathcal C$ is a convex polygon with at most nine vertices. See~Figure~\ref{max-2}.
 This polygon can be partitioned into $7$ triangles at most with vertices chosen from the vertices of $\mathcal C$. Since the number of blue points in each triangle is less than or equal $|\nabla_\mathrm{vrtx}|$, the proof is completed.
\end{proof}

\begin{figure}[h]
\centering
\includegraphics[width=4cm]{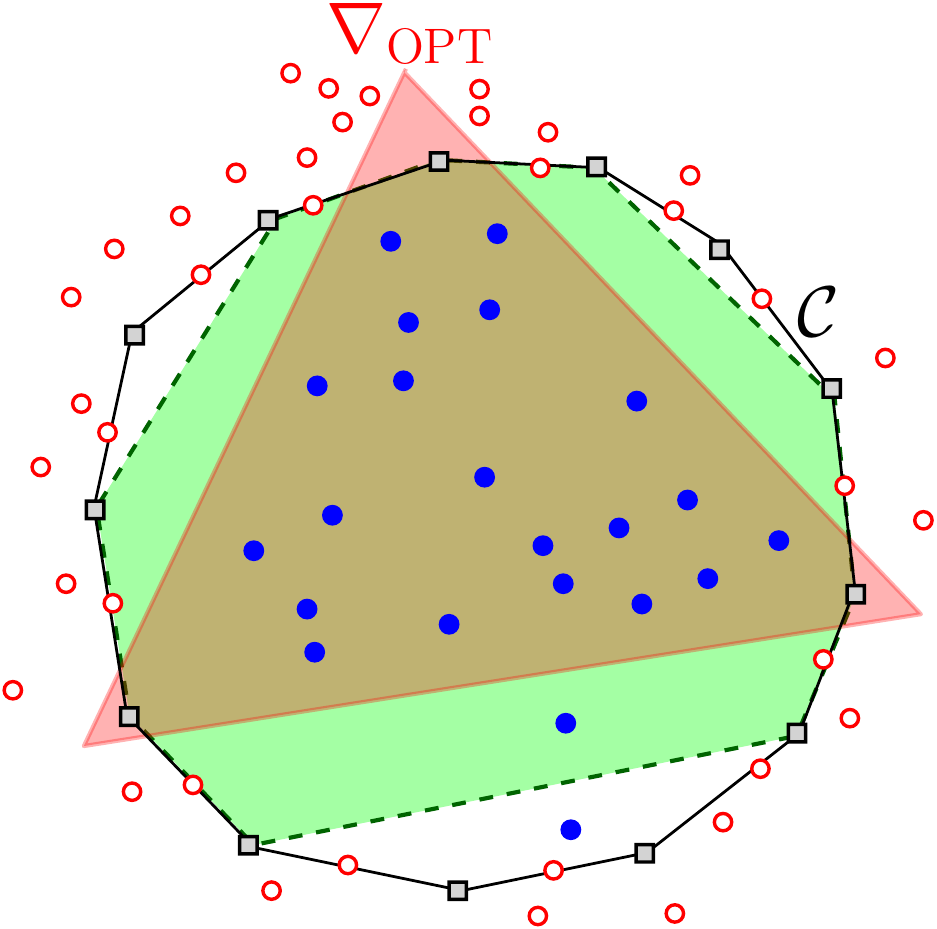}
\caption{Illustrating the intersection of $\nabla_{\mathrm{OPT}}$ and $\mathcal C$.}
\label{max-2}
\end{figure}

Let's take a step further and see what can be achieved.

\begin{lemma}~\label{le-2}
Let $u$ be a vertex on $\mathcal C$, and let $\nabla_\mathrm{u}$ be the maximum triangular separator whose vertices necessarily lie on $u$ and two other vertices of $\mathcal C$.  Then, $|\nabla_\mathrm{u}| \leq 2 |\nabla_\mathrm{vrtx}|$.
\end{lemma}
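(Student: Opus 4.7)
The plan is to observe that the bound is immediate from a containment of feasible sets and therefore does not require any geometric dissection argument at all. By definition, $\nabla_{\mathrm{vrtx}}$ is a maximum-cardinality triangular separator among all triangles whose three vertices are chosen from the vertex set of $\mathcal C$. Meanwhile, $\nabla_u$ is a maximum-cardinality triangular separator among triangles whose three vertices are $u$ together with two other vertices of $\mathcal C$. Since every candidate triangle considered in the definition of $\nabla_u$ is itself a triangle with all three vertices taken from the vertex set of $\mathcal C$ (one of them being the specific vertex $u$), the family of candidates for $\nabla_u$ is a subfamily of the family of candidates for $\nabla_{\mathrm{vrtx}}$.

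First I would spell this out by taking an optimal $\nabla_u$ realised by the triangle with vertices $u, v, w$, where $v$ and $w$ are vertices of $\mathcal C$ distinct from $u$. Since $u, v, w$ are all vertices of $\mathcal C$, this triangle is eligible as a competitor for $\nabla_{\mathrm{vrtx}}$. Because $\nabla_{\mathrm{vrtx}}$ is defined as the maximum over a larger class, its blue-point count cannot decrease when we broaden the search, yielding $|\nabla_u| \leq |\nabla_{\mathrm{vrtx}}|$.

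The lemma's claimed bound $|\nabla_u| \leq 2 |\nabla_{\mathrm{vrtx}}|$ is then an immediate weakening, obtained by noting $|\nabla_{\mathrm{vrtx}}| \leq 2 |\nabla_{\mathrm{vrtx}}|$. The only subtlety that needs to be addressed is to confirm the parsing of the statement, namely that $\nabla_u$ really does restrict the remaining two triangle-vertices to vertices of $\mathcal C$ (and does not, say, allow them to slide along edges of $\mathcal C$). Under the natural reading of the lemma's wording, this is the case, so there is no hard step: the only obstacle is simply being explicit about the inclusion of feasible candidate sets and invoking monotonicity of the maximum under enlarging the domain. No explicit partitioning of $\nabla_{\mathrm{vrtx}}$ into sub-triangles, as in the proof of Lemma~\ref{le-1}, is required here.
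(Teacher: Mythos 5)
Your argument proves only the trivial direction and misses the substantive content of the lemma. You are right that, read literally, $|\nabla_u| \leq 2|\nabla_{\mathrm{vrtx}}|$ follows at once from the containment of candidate families (indeed $|\nabla_u| \leq |\nabla_{\mathrm{vrtx}}|$ holds outright). But that reading makes the lemma useless for its intended purpose: the chain leading to the $28$-approximation needs an upper bound on the \emph{optimum} in terms of the \emph{restricted} quantity, i.e.\ $|\nabla_{\mathrm{vrtx}}| \leq 2|\nabla_u|$, so that together with Lemma~\ref{le-1} one gets $|\nabla_{\mathrm{OPT}}| \leq 7|\nabla_{\mathrm{vrtx}}| \leq 14|\nabla_u|$. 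The inequality in the statement (and in Corollary~\ref{coro-main}) is written with the sides transposed; the paper's own proof makes the intent unambiguous, since it establishes the two-sided bound $|\nabla_u| \leq |\nabla_{\mathrm{vrtx}}| \leq 2|\nabla_u|$. Your proposal explicitly declines to do the partitioning step, which is precisely the step that carries the content.

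The missing idea is a fan dissection from $u$: let $a,b,c$ be the vertices of $\nabla_{\mathrm{vrtx}}$ and suppose $u$ lies on the chain of $\mathcal C$ between $a$ and $b$. Connecting $u$ to $a$, $b$, and $c$ splits $\nabla_{\mathrm{vrtx}}$ into the two triangles $\overset{\triangle}{auc}$ and $\overset{\triangle}{ubc}$, which together cover every blue point of $\nabla_{\mathrm{vrtx}}$. Each of these triangles has $u$ and two other vertices of $\mathcal C$ as its vertices, lies inside the separator $\nabla_{\mathrm{vrtx}}$ (hence contains no red point and stays in $\mathcal P$), and is therefore a feasible candidate for $\nabla_u$; so each contains at most $|\nabla_u|$ blue points, giving $|\nabla_{\mathrm{vrtx}}| \leq |\overset{\triangle}{auc}| + |\overset{\triangle}{ubc}| \leq 2|\nabla_u|$. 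Without this argument you have not bounded the loss incurred by fixing the vertex $u$, and the downstream approximation guarantee does not follow.
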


\begin{proof}
For a fixed vertex $u$, let $a, b$, and $c$ be the vertices of $\nabla_\mathrm{vrtx}$. Further, among the polygonal chains $\overarc{ab}$ (the chain between $a$ and $b$), $\overarc{bc}$, and $\overarc{ca}$, assume that $u$ lies on $\overarc{ab}$. By connecting $u$ to $a, b$ and $c$, the resulting triangles $\overset{\triangle}{a u c}$ and $\overset{\triangle}{u b c}$ cover all the blue points in $\nabla_\mathrm{vrtx}$. Thus, we have
$|\nabla_\mathrm{u}|   \leq  |\nabla_\mathrm{vrtx}| \leq |\overset{\triangle}{a u c}| + |\overset{\triangle}{u b c}| \leq 2 |\nabla_\mathrm{u}|$.
\end{proof}

Lemmas~\ref{le-1}~and~\ref{le-2} lead to the following corollary.

\begin{corollary}~\label{coro-main}
For a fixed vertex $u$, then $|\nabla_\mathrm{u}| \leq 14 |\nabla_{\mathrm{OPT}}|$.
\end{corollary}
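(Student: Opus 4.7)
The plan is to read the statement literally and obtain it as a one-line corollary of the two preceding lemmas, chained in the order in which they are proved. First I would apply Lemma~\ref{le-2} to the fixed vertex $u \in \mathcal C$ to get $|\nabla_\mathrm{u}| \leq 2\,|\nabla_\mathrm{vrtx}|$. Then I would use the \emph{left} inequality of Lemma~\ref{le-1}, namely $|\nabla_\mathrm{vrtx}| \leq |\nabla_{\mathrm{OPT}}|$, which is essentially a definition-chasing fact: $\nabla_\mathrm{vrtx}$ is a particular triangular separator (one whose three corners are constrained to lie at vertices of $\mathcal C$), so the number of blue points it covers cannot exceed the number covered by the unconstrained maximum $\nabla_{\mathrm{OPT}}$.

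Concatenating the two inequalities gives
$$|\nabla_\mathrm{u}| \;\leq\; 2\,|\nabla_\mathrm{vrtx}| \;\leq\; 2\,|\nabla_{\mathrm{OPT}}| \;\leq\; 14\,|\nabla_{\mathrm{OPT}}|,$$
which is exactly what the corollary asserts. Nothing geometric is needed beyond what Lemmas~\ref{le-1} and~\ref{le-2} have already supplied; the statement is recovered by writing down the chain and loosening the final constant from $2$ to $14$.

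I do not anticipate any obstacle. The constant $14 = 2 \cdot 7$ in the statement matches the ``outer'' factors of Lemmas~\ref{le-2} and~\ref{le-1} and is evidently kept so that the same numerology is available when the bound is invoked later. In the direction actually claimed here, however, the factor-$7$ half of Lemma~\ref{le-1} plays no role: only the trivial inequality $|\nabla_\mathrm{vrtx}| \leq |\nabla_{\mathrm{OPT}}|$ is used, and the sharper estimate $|\nabla_\mathrm{u}| \leq 2\,|\nabla_{\mathrm{OPT}}|$ is in fact obtained as a by-product. The only care required is to chain the inequalities in the correct direction, i.e.\ from $\nabla_\mathrm{u}$ up through $\nabla_\mathrm{vrtx}$ to $\nabla_{\mathrm{OPT}}$, rather than from $\nabla_{\mathrm{OPT}}$ down to $\nabla_\mathrm{u}$.
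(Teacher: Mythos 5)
Your chain is formally valid for the inequality exactly as printed, but you have proved the trivial direction of the corollary, and the give\-away is the observation you yourself make: the factor $7$ from Lemma~\ref{le-1} plays no role in your argument. Indeed $|\nabla_\mathrm{u}| \leq |\nabla_{\mathrm{OPT}}|$ already holds with constant $1$, because $\nabla_\mathrm{u}$ is itself an admissible triangular separator (it lies inside $\mathcal C$, hence contains no red points and stays in $\mathcal P$) while $\nabla_{\mathrm{OPT}}$ is the maximum one. A corollary asserting $|\nabla_\mathrm{u}| \leq 14\,|\nabla_{\mathrm{OPT}}|$ therefore carries no information and cannot support the paper's subsequent claim that ``$\nabla_\mathrm{u}$ is a 14-approximation of the maximum triangular separator,'' from which the 28-approximation algorithm is derived. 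The statement is misprinted: the content the two lemmas are designed to deliver, and the one the algorithm needs, is $|\nabla_{\mathrm{OPT}}| \leq 14\,|\nabla_\mathrm{u}|$.

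To obtain that direction you must chain the \emph{other} halves of the two lemmas: from Lemma~\ref{le-1} take $|\nabla_{\mathrm{OPT}}| \leq 7\,|\nabla_\mathrm{vrtx}|$ (the substantive, seven-triangle-decomposition half), and from Lemma~\ref{le-2} take $|\nabla_\mathrm{vrtx}| \leq 2\,|\nabla_\mathrm{u}|$. Note that Lemma~\ref{le-2} is itself printed in the reversed, trivial direction; what its proof actually establishes is the chain $|\nabla_\mathrm{u}| \leq |\nabla_\mathrm{vrtx}| \leq |\overset{\triangle}{a u c}| + |\overset{\triangle}{u b c}| \leq 2\,|\nabla_\mathrm{u}|$, whose useful portion is precisely $|\nabla_\mathrm{vrtx}| \leq 2\,|\nabla_\mathrm{u}|$. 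Combining gives $|\nabla_{\mathrm{OPT}}| \leq 7\,|\nabla_\mathrm{vrtx}| \leq 14\,|\nabla_\mathrm{u}|$ for any fixed vertex $u$ of $\mathcal C$. So while nothing in your derivation is false, it certifies a vacuous statement; the correct move is to flag the reversed inequality and write down the meaningful chain above.
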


So what is remained is to show how to compute $\nabla_\mathrm{u}$ for a fixed vertex $u$. In the following we propose a 2-approximation algorithm to do so.

\subsection{Approximation algorithm to compute $\nabla_\mathrm{u}$}
Let $n=r+b$.
For a fixed vertex $u$, there are $O(n^2)$ triangles with one vertex on $u$ and the other two vertices on the boundary of $\mathcal C$. Hence, by computing the number of blue points in each triangle in $O(f(n))$, the exact value of $|\nabla_\mathrm{u}|$ can be computed in $O(n^2\cdot f(n))$ time. However, in this part we focus to design a faster approximation algorithm to compute $|\nabla_\mathrm{u}|$.

Let $\overset{\triangle}{u v w}$ be a triangle with vertices on $u, v$, and $w$. Now we define $\mathcal T$ as the set of triangles $\overset{\triangle}{u a b}$, where there are $2^i$ vertices (for $0 \leq i \leq \log n$) on the chain $\overarc{ab}$. There are $O(n\log n)$ triangles in $\mathcal T$. Let $\nabla_{\mathcal T}$ be the triangle with the maximum number of blue points in $\mathcal T$.

\begin{lemma}
$|\nabla_{\mathcal T}| \leq |\nabla_\mathrm{u}| \leq 2|\nabla_{\mathcal T}|$.
\end{lemma}

\begin{proof}
Trivially $|\nabla_{\mathcal T}| \leq |\nabla_\mathrm{u}|$. To prove the second inequality, let $\nabla_\mathrm{u}$ be the triangle $\overset{\triangle}{u v w}$, and let $dist(v, w)$ be the number of vertices on the polygonal chain $\overarc{vw}$ on $\mathcal C$. Clearly, there exists some $j$, $0 \leq j \leq \log n$, such that $2^j \leq dist(v,w) \leq 2^{j+1}$. Now starting from $v$, walking in counterclockwise direction on $\mathcal C$, let $z$ be the $2^j$-th vertex that we visit. Further, starting from $w$, walking in clockwise direction on $\mathcal C$, let $z'$ be the $2^{j}$-th vertex that we visit. See Figure~\ref{app-prf}. Clearly the triangles $\overset{\triangle}{u v z}$ and $\overset{\triangle}{uwz'}$ cover $\nabla_\mathrm{u}$. Thus, we have
 $|\nabla_\mathrm{u}| \leq |\overset{\triangle}{uvz}|+|\overset{\triangle}{uwz'}| \leq 2 |\nabla_{\mathcal T}|$.
\end{proof}

\begin{figure}[h]
\centering
\includegraphics[width=5cm]{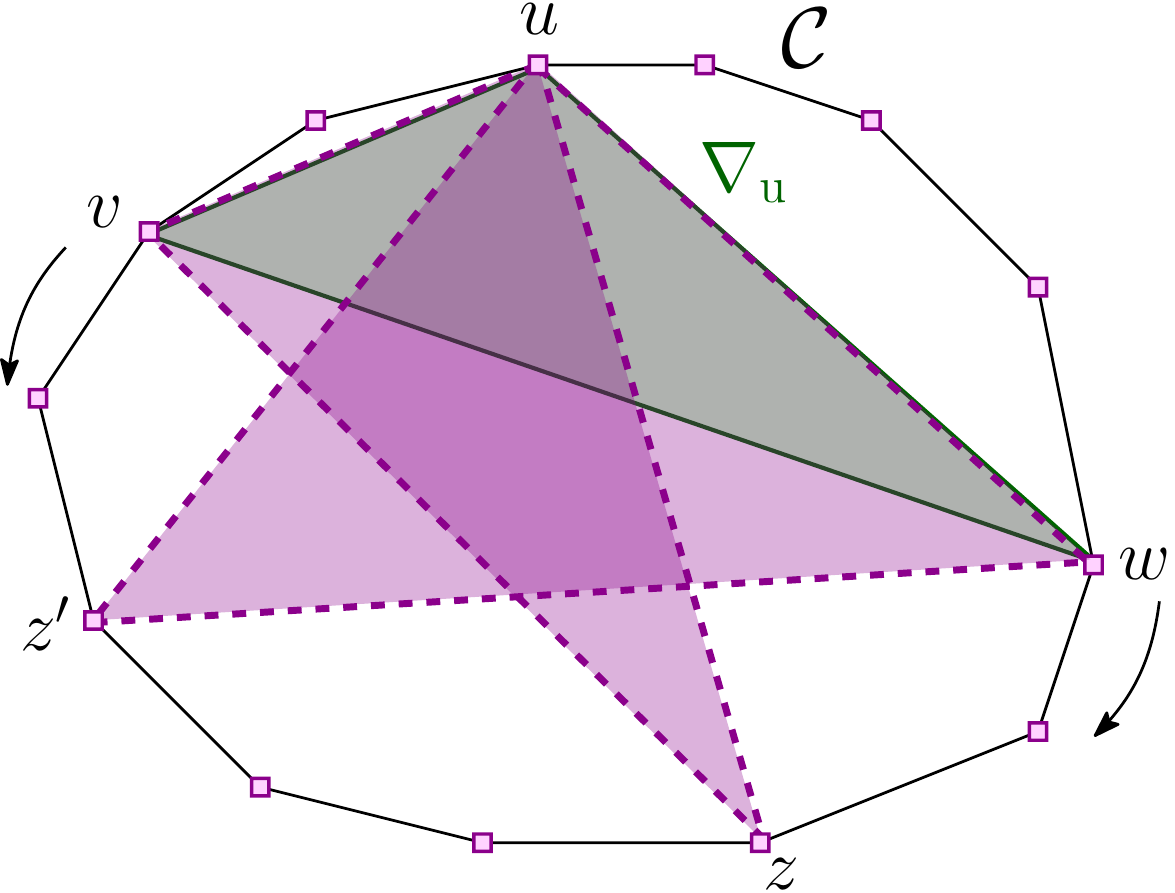}
\caption{Details in proving the approximation ratio.}
\label{app-prf}
\end{figure}

Now to complete our 2-approximation algorithm for finding $\nabla_\mathrm{u}$, we count the number of blue points within each triangle in $\mathcal T$, and take the triangle with the maximum value. It is known that

\begin{theorem}
\cite{cg-book}
\label{tc}
Given a set of $n$ points in the plane. Then for an integer $m$, $n\leq m\leq n^2$, we can pre-process the points in $O(m\log m)$ time and $O(m)$ space, so that counting the number of points within a query triangle can be done in
  $O(\dfrac{n}{\sqrt{m}}\log^3 \dfrac{m}{n})$ time.
\end{theorem}

Thus, by using~Theorem~\ref{tc} we spend an $O(m\log m)$ pre-processing time. This way, for each triangle in $\mathcal T$ we can count the number of blue points inside in $O(\dfrac{n}{\sqrt{m}}\log^3 \dfrac{m}{n})$ time. Since there are $O(n \log n)$ triangles in $\mathcal T$, the overall time complexity of the algorithm is
$O(m\log m + (\dfrac{n}{\sqrt{m}}\log^3 \dfrac{m}{n}) \cdot (n \log n))$. There is a tradeoff between $m$ and $n$. By choosing $m = n^{4/3} \log^2 n$, we can achieve the overall time complexity of $O(n^{4/3} \log^3 n)$ for a 2-approximation algorithm to compute $\nabla_\mathrm{u}$. Further, according to Corollary~\ref{coro-main}, $\nabla_\mathrm{u}$ is a 14-approximation of the maximum triangular separator. This results in a 28-approximation algorithm to compute the maximum triangular separator. The algorithm also handle the case (without altering the time complexity) where the sets $R$ and $B$ are distributed in a convex polygonal environment $\mathcal P$, leading to the following theorem.

\begin{theorem}
Let $B$ be a set of $b$ blue points and  $R$ be a set of $r$ red blue points in the plane (or a convex polygonal environment) with $n = r+b$. Then when the red points  tightly enclose $CH_B$ but do not exist inside, there is a 28-approximation algorithm to compute the maximum triangular separator in $O(n^{4/3} \log^3 n)$ time.
\end{theorem}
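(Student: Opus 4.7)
The plan is to chain together the three approximation guarantees already established above into a single end-to-end procedure and to verify that the arithmetic of approximation factors and running times composes correctly. First, I would compute the minimum-edge convex separator $\mathcal{C}$ via the $O(n \log n)$ algorithm of~\cite{conv-sep}, then expand $\mathcal{C}$ by sweeping each edge outward in its normal direction until it meets a red point or $\partial\mathcal{P}$. The expanded $\mathcal{C}$ has $O(n)$ vertices and, by construction, touches a red point (or the environment boundary) on every side, which is exactly what Lemma~\ref{le-1} requires.

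Next, I would fix an arbitrary vertex $u$ of $\mathcal{C}$ and invoke the procedure of the preceding subsection: construct the family $\mathcal{T}$ of $O(n \log n)$ candidate triangles whose apex is $u$ and whose opposite chain on $\mathcal{C}$ has length $2^i$ for some $0 \leq i \leq \log n$, then count the blue points in each triangle using the data structure of Theorem~\ref{tc}. Taking $m = n^{4/3}\log^2 n$, each range count costs $O\!\left(\tfrac{n}{\sqrt{m}} \log^3 \tfrac{m}{n}\right)$; summed over $O(n \log n)$ triangles and added to the $O(m \log m)$ preprocessing, the total balances at $O(n^{4/3} \log^3 n)$. The algorithm outputs $\nabla_{\mathcal{T}}$, the triangle in $\mathcal{T}$ with the maximum blue count.

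For the approximation factor, I would telescope the three bounds. The subsection's lemma gives $|\nabla_{\mathrm{u}}| \leq 2|\nabla_{\mathcal{T}}|$; Lemma~\ref{le-2} gives $|\nabla_{\mathrm{vrtx}}| \leq 2|\nabla_{\mathrm{u}}|$; Lemma~\ref{le-1} gives $|\nabla_{\mathrm{OPT}}| \leq 7|\nabla_{\mathrm{vrtx}}|$. Multiplying the three inequalities yields $|\nabla_{\mathrm{OPT}}| \leq 28 |\nabla_{\mathcal{T}}|$, so $\nabla_{\mathcal{T}}$ is a $28$-approximation of the maximum triangular separator. Since the construction of $\mathcal{C}$, its expansion, and the enumeration of $\mathcal{T}$ are all dominated by the range-counting step, the overall running time is $O(n^{4/3} \log^3 n)$.

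The convex polygonal environment $\mathcal{P}$ introduces no additional work: by convexity of $\mathcal{P}$, any triangle whose vertices lie inside $\mathcal{P}$ lies entirely in $\mathcal{P}$, and the outward expansion of $\mathcal{C}$ stops at $\partial\mathcal{P}$ when no red point is encountered first, so every triangle in $\mathcal{T}$ automatically respects the environment. The genuinely nontrivial step in the whole pipeline is, of course, not this composition but the geometric bookkeeping that justifies fixing a single vertex $u$ on $\mathcal{C}$ in advance, i.e.\ Lemmas~\ref{le-1} and~\ref{le-2}; these rely on the combinatorial facts that $\nabla_{\mathrm{OPT}}$ crosses the convex boundary of $\mathcal{C}$ in at most six points and that their intersection decomposes into at most seven subtriangles with corners at vertices of $\mathcal{C}$. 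Once those are in hand, the final theorem follows by assembly alone.
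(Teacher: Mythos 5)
Your proposal is correct and follows essentially the same route as the paper: compute and enlarge the minimum-edge convex separator $\mathcal C$, fix one vertex $u$, enumerate the $O(n\log n)$ candidate triangles $\mathcal T$, count blue points with the range-counting structure at $m=n^{4/3}\log^2 n$, and chain the factors $2\cdot 2\cdot 7=28$. The only (harmless) difference is that you invoke Lemma~\ref{le-2} in the form $|\nabla_{\mathrm{vrtx}}|\leq 2|\nabla_{\mathrm{u}}|$, which is the direction its proof actually establishes and the one needed for the chaining.
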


\section{Concluding remarks}
In this paper we have studied triangular separability of bichromatic point sets $B$ and $R$. Modeling the geographical constraint of the environment, first we assumed that points are distributed in a polygonal environment $\mathcal P$, and proposed an output-sensitive algorithm to find the inscribed triangular separators. Our algorithm can also handle the case where $\mathcal P$ is a simple closed curvature defined by a collection of circular arcs. When red points tightly enclose convex hull of $B$ although no red points are inside the hull, then there does not exist any inscribed triangular separators. We have proposed a constant-factor approximation algorithm to find the maximum triangular separator in this case. Computing the maximum triangular separator has a more difficult nature than the previous problem. We have designed a set of $O(n \log n)$ candidate triangular separators, and use a range counting algorithm to find the number of blue points inside these query triangles. The query time is  $O(n^{1/3} poly\log n)$. It has been shown that it  is impossible to achieve $O(poly\log n)$ time for triangular counting queries using $O(n poly\log n)$ space, even when the query triangle contains the origin~\cite{tri-query,tri-thesis}. This gives a brighter insight into the difficult nature of the problem. Therefore, achieving an $O(n poly\log n)$ time approximation algorithm to compute the maximum triangular separator is an interesting open problem.

\bibliographystyle{plain}
\bibliography{tri-ref}{}
  \end{document}

  So, $\Delta_{vmax}$ gives us a $7$-approximation solution of $|\Delta^*|$. Now, the goal is to compute $\Delta_{vmax}$.
Obviously in $O(n^3)$ time we can compute $\Delta_{vmax}$. Our goal is to reduce the time complexity.
Lets fix one of the vertices of the $SCH_B$, say $v$ and compute the maximum separation triangle such that one of its vertices is $v$ and the other two are from the vertices of $SCH_B$. Denote this triangle by $\Delta_v$
\begin{figure}[h]
\centering
\includegraphics[width=5cm]{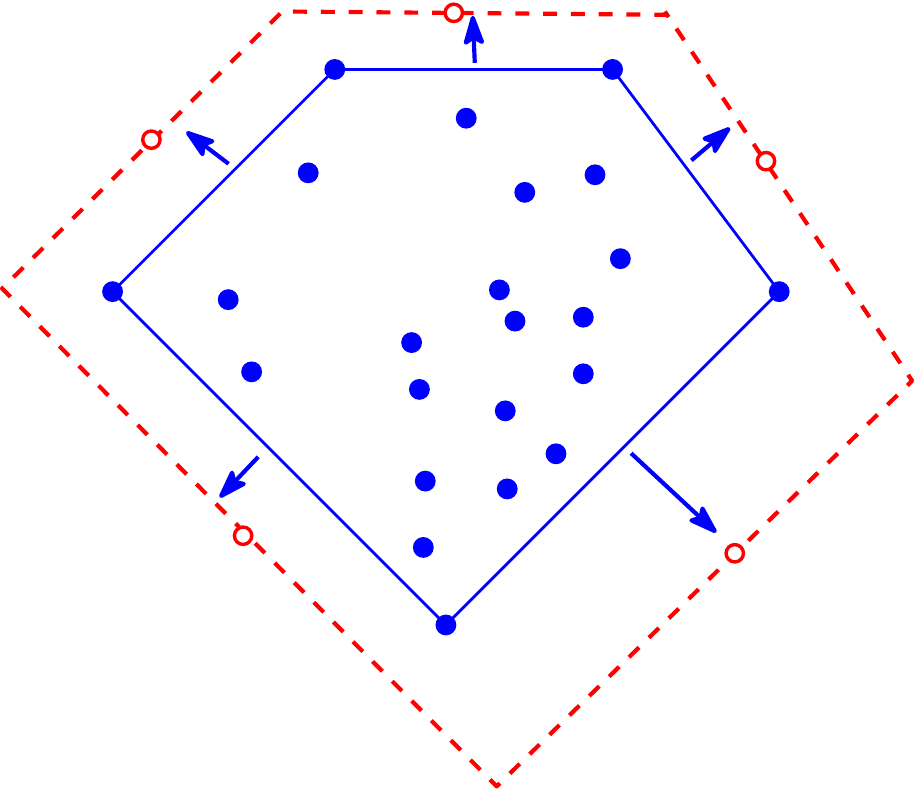}
\caption{}
\label{max-1}
\end{figure}
We fix $v$.
Let $a,b$ and $c$ be the vertices of $\Delta_{vmax}$ and assume that $v$ is between $a$ and $b$. If we connect $v$ to $a$ and $b$ and $c$, then $\Delta_{v,a,c}$ and $\Delta_{v,b,c}$ are two triangles that cover all the points inside $\Delta_vmax$. Then
$$\Delta_v\leq \Delta_{vmax}\leq \Delta_{v,a,c}+\Delta_{v,b,c}\leq 2\Delta_v$$
Our goal is to compute the $\Delta_v$. In the following we give a 2-approximation algorithm for $\Delta_v$ which gives a $28$-approximation for $\Delta^*$.
If we fix $v$, then there are $O(n^2)$ triangles that one of their vertices is $v$.So, if for each triangle we compute the number of points in each triangle in $O(f(n))$, then the exact value of $\Delta_v$ can be computed in $O(n^2f(n))$.
Let $\Delta_{v,a,b}$, be the triangle with the vertices $v$ and $a$ and $b$. Suppose that $a$ and $b$. Let $A$ be the set of triangles $\Delta_{v,a,b}$ such that for some $i$ we have $b-a=2^i$. Then we have $|A|=O(n\log n)$.
If we compute the maximum triangle of $A$, then $$\Delta_{maxA}\leq \Delta_v\leq 2\Delta_{maxA}$$.
Let $\Delta_v=\Delta_{v,a,b}$ and $2^j\leq |b-a|\leq 2^{j+1}$,
$$\Delta_{v,a,b}\leq \Delta_{v,a,{2^j}}+\Delta_{v,b-2^i,b}$$.
$$\leq 2\Delta_{maxA}$$.
Since, $|A|=O(n\log n)$, for each $\Delta\in A$, we compute $|Delta|$ and then compute the $\Delta_{maxA}$.
So, we need to give an algorithm that can compute the number of blue points in each $\Delta_A$ in an appropriate time.
By Theorem~\ref{tc} if we use $O(m\log m)$ preprocessing time, and compute the number of points in each triangle in  $O(\dfrac{n}{\sqrt{m}}\log^3 \dfrac{m}{n})$, the overall time would be $O(m\log m)+ O(\dfrac{n}{\sqrt{m}}\log^3 \dfrac{m}{n})$.
Let $m\log m=\dfrac{n}{\sqrt{m}}\log^3 \dfrac{m}{n}$, then the time complexity is $O(n^{\frac{4}{3}}\log n)$. So, we conclude our main theorem in the following.
For a given set on blue and red points in the plain, if $CH_B$ does not contain any red point, then  there is an algorithm that compute a 28-approximation factor solution for the maximum triangle separation problem in $O(n^{\frac{4}{3}}\log n)$.

\subsection{Approximation algorithm to compute $\nabla_\mathrm{u}$}
For a fixed vertex $u$, there are $O({c_B}^2)$ triangles with one vertex on $u$ and the other two vertices on the boundary of $\mathcal C$. Hence, by computing the number of blue points in each triangle in $O(f())$, the exact value of $|\nabla_\mathrm{u}|$ can be computed in $O({c_B}^2)\cdot f()$ time. However, in this part we focus to design a faster approximation algorithm to compute $|\nabla_\mathrm{u}|$.

Let $\overset{\triangle}{u v w}$ be a triangle with vertices on $u, v$, and $w$. Now we define $\mathcal T$ as the set of triangles $\overset{\triangle}{u a b}$, where there are $2^i$ vertices (for $0 \leq i \leq \log n$) on the chain $\overarc{ab}$. There are $O(n\log n)$ triangles in $\mathcal T$. Let $\nabla_{\mathcal T}$ be the triangle with the maximum number of blue points in $\mathcal T$.

\begin{lemma}
$|\nabla_{\mathcal T}| \leq |\nabla_\mathrm{u}| \leq 2|\nabla_{\mathcal T}|$.
\end{lemma}

\begin{proof}
Showing $|\nabla_{\mathcal T}| \leq |\nabla_\mathrm{u}|$ is trivial. To prove the second inequality, let $\nabla_\mathrm{u}$ be the triangle $\overset{\triangle}{u a b}$, and let $dist(a, b)$ be the number of vertices of $\overarc{ab}$ on $\mathcal C$. Clearly, $2^j \leq dist(a,b) \leq 2^{j+1}$ for some $0 \leq j \leq \log n$.
 Then, we have
 $$|\nabla_\mathrm{u}| \leq |\overset{\triangle}{u a 2^j}|+|\overset{\triangle}{u b-2^i b}| \leq 2 |\nabla_{\mathcal T}|$$
\end{proof}

Now to complete our 2-approximation algorithm for finding $\nabla_\mathrm{u}$, we count the number of blue points within each triangle in $\mathcal T$, and take the triangle with the maximum value. It is known that

\begin{theorem}
\cite{cg-book}
\label{tc}
Given a set of $n$ points in the plane. Then for an integer $m$, $n\leq m\leq n^2$, we can pre-process the points in $O(m\log m)$ time and $O(m)$ space, so that counting the number of points within a query triangle can be done in
  $O(\dfrac{n}{\sqrt{m}}\log^3 \dfrac{m}{n})$ time.
\end{theorem}

Thus, by using~Theorem~\ref{tc} we spend an $O(m\log m)$ pre-processing time. This way, for each triangle in $\mathcal T$ we can count the number of blue points inside in $O(\dfrac{n}{\sqrt{m}}\log^3 \dfrac{m}{n})$ time. Since there are $O(n \log n)$ triangles in $\mathcal T$, the overall time complexity of the algorithm is
$O(m\log m + (\dfrac{n}{\sqrt{m}}\log^3 \dfrac{m}{n}) \cdot (n \log n))$. There is a tradeoff between $m$ and $n$. By choosing $m = n^{4/3} \log^2 n$, we can achieve the overall time complexity of $O(n^{4/3} \log^3 n)$ for a 2-approximation algorithm to compute $\nabla_\mathrm{u}$. Further, according to Corollary~\ref{coro-main}, $\nabla_\mathrm{u}$ is a 14-approximation of the maximum triangular separator. This results in a 28-approximation algorithm to compute the maximum triangular separator.

\begin{theorem}
Given a set $B$ of $b$ blue points and a set $R$ of red blue points in the plane, where the red points  tightly enclose $CH_B$ but do not exist inside, then there is a 28-approximation algorithm to compute the maximum triangular separator in $O(n^{4/3} \log^3 n)$ time.
\end{theorem}